\documentclass[aps,pra,10pt,twocolumn,floatfix,superscriptaddress,showkeys]{revtex4-2}
\usepackage{graphicx}
\usepackage{amsmath,amssymb,amsthm,mathtools,bm}
\usepackage{booktabs}
\usepackage{array}
\usepackage{tabularx}
\usepackage{multirow}
\usepackage{xcolor}
\usepackage{microtype}
\usepackage{enumitem}
\usepackage{placeins}
\usepackage{physics}
\usepackage{dsfont}
\usepackage[colorlinks=true,citecolor=blue,linkcolor=blue,urlcolor=blue]{hyperref}
\usepackage[capitalize]{cleveref}
\graphicspath{
{sp_hqaoa_paper_sim-v6/figures/main/}
{sp_hqaoa_paper_sim-v6/figures/appendix/}
}
\newtheoremstyle{apsthm}%
{\topsep}
{\topsep}
{\normalfont}
{\parindent}
{\itshape}
{.}
{ }
{\thmname{#1}\thmnumber{ #2}\thmnote{ (#3)}}

\theoremstyle{apsthm}
\newtheorem{theorem}{Theorem}
\newtheorem{proposition}{Proposition}
\newtheorem{lemma}{Lemma}
\newtheorem{corollary}{Corollary}

\newcommand{\wtot}{W_{\rm tot}}


\begin{document}

\title{Symmetry-Preserving Hierarchical Quantum Approximate Optimization Algorithm with Subproblems Beyond the Qubit Limit}

\author{Zhaoyang Song}
\affiliation{Center for Quantum Sciences and School of Physics, Northeast Normal University, Changchun 130024, China}
\author{Wei Zhang}
\affiliation{School of Physics, Harbin Institute of Technology, Harbin, Heilongjiang 150001, China}
\author{Ji Cao}
\email{caoji@ybu.edu.cn}
\affiliation{Department of Physics, College of Science, Yanbian University, Yanji, Jilin 133002, China}
\affiliation{Institute of Quantum Science and Technology, Yanbian University, Yanji, Jilin 133002, China}
\author{Xuexi Yi}
\email{yixx@nenu.edu.cn}
\affiliation{Center for Quantum Sciences and School of Physics, Northeast Normal University, Changchun 130024, China}
\author{Hong-Fu Wang}
\email{hfwang@ybu.edu.cn}
\affiliation{Center for Quantum Sciences and School of Physics, Northeast Normal University, Changchun 130024, China}
\affiliation{Department of Physics, College of Science, Yanbian University, Yanji, Jilin 133002, China}
\affiliation{Institute of Quantum Science and Technology, Yanbian University, Yanji, Jilin 133002, China}
\affiliation{Institute for Interdisciplinary Quantum Information Technology, Jilin Engineering Normal University, Changchun, Jilin 130052, China}
\affiliation{Jilin Engineering Laboratory for Quantum Information Technology, Changchun, Jilin 130052, China}

\date{\today}

\begin{abstract}
The quantum approximate optimization algorithm (QAOA) is a paradigmatic variational quantum algorithm for combinatorial optimization problems, yet on current quantum devices the available qubit count and achievable circuit depth remain the main bottlenecks to its practical application. Divide-and-merge reduces the number of variables encoded in a single quantum circuit, while contraction based on coarse graining extends the effective range of shallow circuits, but existing work has not addressed both challenges simultaneously. We propose symmetry-preserving hierarchical QAOA (SP-HQAOA), in which parity sector contraction preserves the zero-field Ising structure of the effective objective at every contraction step, thereby addressing both within a single algorithm. We further prove that this reduction is the unique one with this property. Successive contractions further decouple the maximum leaf block size $B$ from the qubit limit $Q$ per quantum call, so every quantum circuit uses at most $Q$ qubits even when $B>Q$. We also establish a deterministic repair-risk bound on the possible loss in optimal value caused by each sector restriction, and the resulting bounds can be summed across contraction steps and levels of the recursive hierarchy. With depth $p=1$, SP-HQAOA attains $99.4\%$--$99.7\%$ of the exact optimum on $N=40$ random regular MaxCut leaf blocks and, with at most $10$ qubits per circuit, $96\%$ of the classical reference for $N=500$. SP-HQAOA demonstrates the capacity to use shallow quantum circuits to address combinatorial optimization problems whose variable count far exceeds the qubit limit of a single quantum call, and may enable practical quantum optimization advantages in the near term.
\end{abstract}

\keywords{Quantum approximate optimization algorithm, MaxCut, Ising optimization, renormalization, divide and conquer}
\maketitle

\section{Introduction}

The quantum approximate optimization algorithm (QAOA) is a canonical variational algorithm for Ising formulations of combinatorial optimization~\cite{barahona1982complexity,lucas2014ising,mohseni2022ising} on noisy intermediate-scale quantum (NISQ) devices~\cite{preskill2018nisq,bharti2022nisq,farhi2014qaoa,zhou2020qaoa,blekos2024review}. It addresses problems such as MaxCut, spin-glass optimization, and quadratic unconstrained binary optimization~\cite{goemans1995improved,wang2018qaoa,wurtz2021maxcut,farhi2022sk,boros2002pseudo,ebadi2022mis,kim2022rydberg,bravyi2022coloring,brandao2022sdp,boulebnane2024sat,shaydulin2024scaling}, where spin configurations with lower energy correspond to better candidate solutions. QAOA alternates problem and mixing evolutions to prepare a parameterized quantum state, whose parameters are optimized variationally~\cite{harrigan2021qaoa,pagano2020qaoa}. This architecture transforms a discrete optimization objective into a quantum state preparation problem. However, the practical applicability of QAOA remains constrained by the available number of qubits and the implementable circuit depth~\cite{guerreschi2019qaoa,farhi2020typical,farhi2020worstcase}. A problem with $N$ variables typically requires $O(N)$ qubits, so the available quantum register bounds the encodable problem size. Due to the constraint imposed by the quantum circuit light cone on information propagation~\cite{lieb1972finite,bravyi2006lieb}, QAOA at fixed depth $p$ can establish correlations only within a finite graph distance~\cite{hastings2019local,akshay2020reachability}. Solving these two fundamental problems is a prerequisite for QAOA to demonstrate practical utility on NISQ devices.

Graph partitioning, divide-and-merge, recursive decomposition, and multilevel quantum local search decompose a large instance into smaller subproblems, solve them independently, and reconstruct a global solution by merging or successive refinement~\cite{zhou2023qaoa2,li2021dcqaoa,ponce2025graph,ushijima2021multilevel}. These methods reduce the number of variables per quantum call, but the correlations that can be established within each subproblem remain bounded by the light cone of a shallow circuit. Recursive QAOA and multiscale QAOA based on real-space renormalization instead use local measurement outcomes to select variable constraints or effective subspaces and propagate this information through successively reduced problems, allowing shallow circuits to act on increasingly large effective scales~\cite{bravyi2020obstacles,zou2025mqaoa}. However, these methods provide no mechanism for imposing a fixed upper limit on the number of variables in each quantum call, so the largest single call is not decoupled from the size of the current problem. A natural question is whether the two approaches can be combined to reduce the number of variables per quantum call while extending the effective range of shallow circuits.

Combining the two requires first resolving a structural inconsistency that arises under iteration. Exact block flip merge can be applied recursively because each merge produces another zero-field quadratic optimization problem of the same class~\cite{zhou2023qaoa2}. A generic real-space coarse graining can compress local degrees of freedom but may generate one-body field terms in the effective objective, taking the reduced problem out of the zero-field Ising class and preventing the same contraction and merge rules from being reused at the next level. Therefore, the combination requires a renormalization scheme that preserves the zero-field Ising structure throughout successive contractions.

We propose symmetry-preserving hierarchical QAOA (SP-HQAOA, Fig.~\ref{fig:incompatibility}), which resolves this structural inconsistency. At each contraction step, SP-HQAOA fixes the relative parity of the endpoints of a matched edge and uses one effective spin to represent the two retained configurations. Under this substitution, the internal edge reduces to a constant, while the boundary couplings remain two-body products. Therefore, the reduction generates no one-body fields, and the effective objective remains within the zero-field signed Ising class. This zero-field closure ensures that parity sector contraction and exact block flip merging remain applicable at subsequent recursive levels. We further prove that fixing a parity sector is the unique choice compatible with zero-field closure. For each selected contraction edge, the retained sector is determined from the parity statistic \(\langle Z_iZ_j\rangle\) obtained by shallow QAOA from the same computational-basis samples used for energy estimation. When the maximum leaf block size \(B\) exceeds the qubit limit \(Q\) of a single quantum call, SP-HQAOA progressively reduces the number of effective variables through successive contractions while keeping every QAOA call to at most \(Q\) qubits. Because successive contractions require multiple parity sectors to be fixed, each sector restriction may reduce the optimal value. To control this loss, we establish a deterministic repair bound \(R_{ij}(\eta)\). These bounds can be summed within a matching and accumulated across contraction steps and recursive levels.

MaxCut is the canonical combinatorial optimization problem studied with QAOA. Our numerical simulations on MaxCut instances verify the contraction identity and repair-risk bound, and show that parity statistics obtained from shallow QAOA effectively guide sector selection. On \(40\)-vertex random regular leaf blocks, SP-HQAOA reaches \(99.4\%\)--\(99.7\%\) of the exact optimum using depth-\(p=1\) circuits followed by classical refinement. On weighted \(3\)-regular instances, fixing \(Q=10\) and progressively increasing \(B\) reduces the fraction of cross-block edge weight and steadily improves the cut value. At \(N=500\) and \(B=80\), SP-HQAOA still reaches \(96\%\) of the classical reference value. This framework allows local correlations extracted by shallow QAOA to contribute to the solution of larger problems through successive contractions and recursive merging, opening new possibilities for large-scale combinatorial optimization with limited quantum resources.

\section{Problem setting and structural obstruction}
\label{sec:setting-obstruction}

This section states the algebraic conditions required below. We first define zero-field signed Ising objectives and extend the exact block flip merge to this class, then show how one-body fields can arise under coarse-graining contractions and alter the form of the effective objective.

\subsection{Zero-field signed Ising objectives}
\label{sec:problem}

Let $G=(V,E)$ be a weighted undirected graph with edge weights $W_{ij}\ge 0$. Given a binary assignment $z\in\{\pm1\}^V$, the MaxCut problem~\cite{goemans1995improved} maximizes the cut value
\begin{equation}
C_G(z)=\frac{1}{2}\sum_{(i,j)\in E}W_{ij}(1-z_i z_j).
\label{eq:maxcut}
\end{equation}
An edge contributes to the cut exactly when its endpoints have opposite signs. Maximizing $C_G$ is equivalent to minimizing the zero-field Ising energy
\begin{equation}
H_G(z)=\sum_{(i,j)\in E}W_{ij}z_i z_j,
\label{eq:maxcut-ising}
\end{equation}
since $C_G(z)=\tfrac{1}{2}\sum_{(i,j)\in E}W_{ij}-\tfrac{1}{2}H_G(z)$. Equation~\eqref{eq:maxcut-ising} contains no one-body terms proportional to a single spin \(z_i\), a property we call the zero-field structure.

The MaxCut problem corresponds to the special case in which all couplings are nonnegative. Allowing arbitrary real couplings gives the zero-field signed Ising objective~\cite{barahona1982complexity}
\begin{equation}
H(z)=\kappa+\sum_{\{i,j\}\subseteq V}J_{ij}z_i z_j,
\qquad J_{ij}\in\mathbb{R},
\label{eq:signed-ising}
\end{equation}
where uncoupled pairs have \(J_{ij}=0\), and $\kappa$ is independent of $z$. We retain this constant because successive reductions naturally generate additive constants. We use the symmetric convention $J_{ab}=J_{ba}$ and call a pair of spins with $J_{ij}\ne0$ an edge of $H$.

In Eq.~\eqref{eq:signed-ising}, each nonzero coupling $J_{ij}$ determines a particular relative parity between $z_i$ and $z_j$. Minimizing $J_{ij}z_i z_j$ gives $z_i z_j=-\operatorname{sgn}(J_{ij})$. More explicitly, each coupling term admits the exact decomposition
\begin{equation}
J_{ij}z_i z_j = -|J_{ij}| + 2|J_{ij}|\,\mathbf{1}[z_i z_j \ne -\operatorname{sgn}(J_{ij})],
\label{eq:signed-term-parity-indicator}
\end{equation}
where the second term is a nonnegative weight times a parity mismatch indicator.

\subsection{Exact block flip merge}
\label{sec:merge}

Suppose the variable set $V$ is partitioned into blocks $V_1,\ldots,V_h$. For each block $V_r$, let $x^{(r)}\in\{\pm1\}^{V_r}$ be a fixed assignment returned by a local solver. Replacing $x^{(r)}$ with $-x^{(r)}$ leaves every product $x_u^{(r)}x_v^{(r)}$ with $u,v\in V_r$ unchanged. Only terms coupling different blocks can distinguish the two assignments. To encode this choice, we introduce a block flip variable $\sigma_r\in\{\pm1\}$ such that
\begin{equation}
z_u = \sigma_r\,x_u^{(r)}, \qquad u\in V_r.
\end{equation}
Choosing $\boldsymbol{\sigma}=(\sigma_1,\ldots,\sigma_h)$ is the merge problem, illustrated in Fig.~\ref{fig:incompatibility}(a). The following proposition is a direct algebraic extension of the QAOA\(^{2}\) merge theorem~\cite{zhou2023qaoa2} from MaxCut with nonnegative couplings to the zero-field signed Ising class.

\begin{proposition}[Exact signed Ising merge]
\label{prop:merge}
For fixed local assignments $x^{(r)}$ in a given zero-field signed Ising problem, optimization over the block flip variables $\boldsymbol{\sigma}=(\sigma_1,\ldots,\sigma_h)$ is equivalent to minimizing another zero-field signed Ising objective
\begin{equation}
H_{\rm merge}(\boldsymbol{\sigma})=\kappa_{\rm merge}+\sum_{1\le r<t\le h}K_{rt}\,\sigma_r\sigma_t,
\label{eq:merge-ising}
\end{equation}
where $\kappa_{\rm merge}$ collects all terms independent of $\boldsymbol{\sigma}$, and the effective couplings are
\begin{equation}
K_{rt}=\sum_{u\in V_r}\sum_{v\in V_t}J_{uv}\,x_u^{(r)}x_v^{(t)}.
\label{eq:merge-coupling}
\end{equation}
All terms involving only variables in a single block are independent of $\boldsymbol{\sigma}$.
\end{proposition}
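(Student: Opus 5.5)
The plan is a direct substitution argument. I start from the zero-field signed Ising objective in Eq.~\eqref{eq:signed-ising}, $H(z)=\kappa+\sum_{\{u,v\}}J_{uv}z_uz_v$, and insert the block flip parametrization $z_u=\sigma_r x_u^{(r)}$ for $u\in V_r$. Since the $V_r$ partition $V$, every unordered pair $\{u,v\}$ belongs to exactly one of two classes: \emph{intra-block} pairs with $u,v\in V_r$ for a single $r$, and \emph{inter-block} pairs with $u\in V_r$, $v\in V_t$, $r\ne t$. I would split the coupling sum along this dichotomy.

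For an intra-block pair, the term becomes $J_{uv}\sigma_r^2x_u^{(r)}x_v^{(r)}=J_{uv}x_u^{(r)}x_v^{(r)}$, because $\sigma_r^2=1$. It is therefore independent of $\boldsymbol{\sigma}$, which proves the final sentence of the statement. Collecting these terms together with $\kappa$ defines
\begin{equation*}
\kappa_{\rm merge}=\kappa+\sum_{r=1}^{h}\sum_{\{u,v\}\subseteq V_r}J_{uv}x_u^{(r)}x_v^{(r)}.
\end{equation*}
For an inter-block pair, the term becomes $J_{uv}x_u^{(r)}x_v^{(t)}\,\sigma_r\sigma_t$. I would group all inter-block pairs by their unordered block pair $\{r,t\}$, and label each so that $r<t$ (using $J_{uv}=J_{vu}$). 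Then each such pair appears exactly once in the double sum $\sum_{u\in V_r}\sum_{v\in V_t}$. This yields precisely the coefficient $K_{rt}$ of Eq.~\eqref{eq:merge-coupling}, and hence the identity $H(\boldsymbol{\sigma}\cdot x)=H_{\rm merge}(\boldsymbol{\sigma})$ for every $\boldsymbol{\sigma}\in\{\pm1\}^h$.

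Equivalence of the optimization problems then follows. The map $\boldsymbol{\sigma}\mapsto z$ is a bijection from $\{\pm1\}^h$ onto the set of admissible assignments of the form $z|_{V_r}=\pm x^{(r)}$; it is injective because each block is nonempty. Since the objective values agree pointwise, minimizers correspond. Finally, $H_{\rm merge}$ contains only a constant and two-body products $\sigma_r\sigma_t$ with real coefficients and no linear terms, so it lies in the zero-field signed Ising class.

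The only step that needs real care is the bookkeeping that guarantees no double counting, given the symmetric convention and the restriction $r<t$. I would handle it by stating explicitly the bijection between unordered inter-block pairs and triples $(r<t,\,u\in V_r,\,v\in V_t)$. With that in place, everything else is a routine substitution.
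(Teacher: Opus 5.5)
Your proposal is correct and follows essentially the same route as the paper: substitute $z_u=\sigma_r x_u^{(r)}$, use $\sigma_r^2=1$ to make intra-block terms $\boldsymbol{\sigma}$-independent and absorb them with $\kappa$ into $\kappa_{\rm merge}$, then collect inter-block terms into $K_{rt}\sigma_r\sigma_t$. Your explicit formula for $\kappa_{\rm merge}$, the no-double-counting bookkeeping for $r<t$, and the bijection remark are extra detail the paper leaves implicit. They do not change the argument.
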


\begin{proof}
Substitute $z_u=\sigma_r x_u^{(r)}$ into Eq.~\eqref{eq:signed-ising}. For $u,v\in V_r$, the corresponding coupling term becomes
\begin{equation}
J_{uv}\,\sigma_r x_u^{(r)}\,\sigma_r x_v^{(r)}=J_{uv}\,x_u^{(r)}x_v^{(r)},
\end{equation}
because $\sigma_r^2=1$. Hence all terms involving only variables in $V_r$ are independent of $\boldsymbol{\sigma}$. For $u\in V_r$ and $v\in V_t$ with $r\ne t$, the term coupling $V_r$ and $V_t$ becomes
\begin{equation}
J_{uv}\,x_u^{(r)}x_v^{(t)}\,\sigma_r\sigma_t.
\end{equation}
Collecting all terms that couple distinct blocks gives Eq.~\eqref{eq:merge-coupling}, while all terms involving only variables in a single block and the original constant are absorbed into $\kappa_{\rm merge}$.
\end{proof}

By Proposition~\ref{prop:merge}, the merge objective of a zero-field signed Ising problem is again an objective in the same class, so whenever the merge graph still exceeds the available qubit count, it can itself be partitioned, solved blockwise, and merged again.

This closure is precisely the structural basis exploited by QAOA\(^{2}\) for MaxCut~\cite{zhou2023qaoa2}, where invariance of each block objective under the global flip allows the merge problem to be written as another MaxCut instance.

\subsection{Coarse graining and zero-field closure}
\label{sec:obstruction}

Coarse-graining contractions need not preserve the zero-field structure. In a deterministic reduction of two spins to one effective spin \(s\in\{\pm1\}\), if \(J_{ij}\ne0\) and the two retained assignments of \((z_i,z_j)\) differ at only one endpoint, the internal coupling \(J_{ij}z_iz_j\) reduces to \(\pm J_{ij}s\), producing a one-body field.

MQAOA provides a concrete example in which a quantum subspace projection loses zero-field closure. It constructs a two-dimensional effective subspace for a pair of spins from the two-qubit reduced density matrix of the QAOA state~\cite{zou2025mqaoa}. The retained basis has two product forms, one of which can be written as
\begin{equation}
|0_s\rangle = |0_i\rangle|\psi_0\rangle_j,
\qquad
|1_s\rangle = |1_i\rangle|\psi_1\rangle_j,
\label{eq:mqaoa-basis}
\end{equation}
where $|\psi_0\rangle$ and $|\psi_1\rangle$ are conditional states of qubit $j$ determined by the reduced density matrix. Let $\omega:\mathcal{H}_s\to\mathcal{H}_i\otimes\mathcal{H}_j$ denote the isometry whose columns are the two retained basis states. In the basis of Eq.~\eqref{eq:mqaoa-basis},
\begin{equation}
\omega^\dagger Z_i \omega = Z_s,
\end{equation}
whereas $Z_j$ generally maps to
\begin{equation}
\omega^\dagger Z_j \omega = \mu_I I+\mu_Z Z_s,
\label{eq:generic-zj-map}
\end{equation}
with
\begin{equation}
\begin{aligned}
\mu_I&=\frac{1}{2}\bigl(\langle\psi_0|Z|\psi_0\rangle+\langle\psi_1|Z|\psi_1\rangle\bigr),\\
\mu_Z&=\frac{1}{2}\bigl(\langle\psi_0|Z|\psi_0\rangle-\langle\psi_1|Z|\psi_1\rangle\bigr).
\end{aligned}
\label{eq:mu-def}
\end{equation}
The coefficient $\mu_I$ is generically nonzero unless the two conditional states have exactly opposite $Z$ polarizations. The other basis form used in MQAOA exchanges the roles of $i$ and $j$, so the identity component appears in the effective representation of $Z_i$ rather than $Z_j$.

If $\mu_I\ne0$, projecting a zero-field two-body boundary contribution can generate a one-body term. Suppose $i$ and $j$ are both coupled to the same external vertex $k\notin\{i,j\}$. Omitting identity factors on unaffected sites,
\begin{equation}
\omega^\dagger
\left(J_{ik}Z_iZ_k+J_{jk}Z_jZ_k\right)
\omega
=
\mu_I J_{jk}Z_k+
\left(J_{ik}+\mu_Z J_{jk}\right)Z_sZ_k.
\label{eq:combined-boundary-field-leakage}
\end{equation}
The term $\mu_I J_{jk}Z_k$ is a one-body field on $k$. The same identity component also produces a field through the internal edge as
\begin{equation}
\omega^\dagger(J_{ij}Z_iZ_j)\omega
=
J_{ij}\bigl(\mu_I Z_s+\mu_Z I\bigr).
\label{eq:internal-field-leakage}
\end{equation}
Thus, the projected Hamiltonian remains an Ising Hamiltonian but need not remain zero-field.

For the projection onto the basis in Eq.~\eqref{eq:mqaoa-basis}, \(\mu_I=0\) removes the field terms shown in Fig.~\ref{fig:incompatibility}(b). A deterministic lift also requires \(|\mu_Z|=1\), because when \(|\mu_Z|<1\), the retained states do not assign a definite \(Z\) value to the second spin.

\begin{figure*}[t]
\centering
\includegraphics[width=0.94\textwidth]{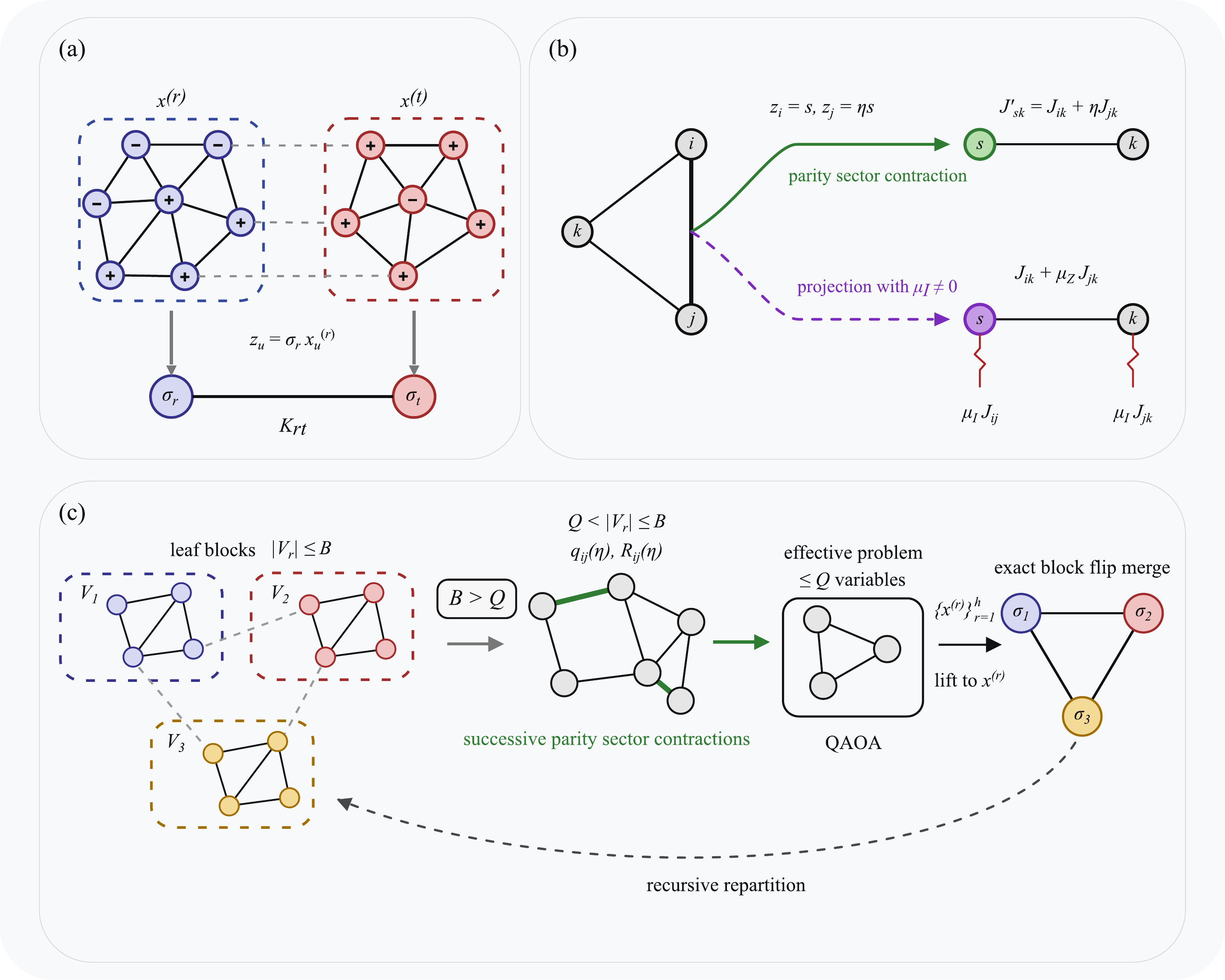}
\caption{Block flip merge, parity sector contraction, and the \(B>Q\) hierarchy.
(a)~Exact block flip merge. Global flips of fixed assignments \(x^{(r)}\) and \(x^{(t)}\) are parameterized by \(z_u=\sigma_r x_u^{(r)}\), producing a zero-field signed Ising merge problem with effective coupling \(K_{rt}\).
(b)~Two treatments of the same pair \((i,j)\). Parity sector contraction uses \(z_i=s\) and \(z_j=\eta s\), giving the two-body coupling \(J'_{sk}=J_{ik}+\eta J_{jk}\) without one-body fields. For a projection with \(\mu_I\ne0\), up to an additive constant, the two-body coefficient becomes \(J_{ik}+\mu_ZJ_{jk}\), with one-body coefficients \(\mu_IJ_{ij}\) and \(\mu_IJ_{jk}\) on \(s\) and \(k\), respectively.
(c)~Outer leaf blocks satisfy \(|V_r|\le B\), while each QAOA problem contains at most \(Q\) variables and \(B\) may exceed \(Q\). Blocks with at most \(Q\) variables are solved directly. A representative oversized block \(Q<|V_r|\le B\) is reduced by repeated matched contractions. The QAOA sector probabilities \(q_{ij}(\eta)\) and repair risks \(R_{ij}(\eta)\) jointly determine sector selection subject to \(\sum_{(i,j)\in M}R_{ij}(\eta_{ij})\le T_R\). Each reduced block is solved and lifted to \(x^{(r)}\). All leaf assignments enter the exact block flip merge of panel~(a), and a merge problem exceeding \(Q\) variables is recursively repartitioned.}
\label{fig:incompatibility}
\end{figure*}

\section{Symmetry-preserving parity renormalization}
\label{sec:parity-rg}

We first prove that the reduction obtained by imposing the relative parity constraint \(z_iz_j=\eta\) on an edge keeps the objective within the zero-field signed Ising class and that it is the only deterministic single-edge reduction with this closure property. We then extend the reduction to a matching, iterate it across contraction steps, derive sector probabilities from QAOA parity statistics, and show how iteration extends the effective range of a shallow circuit.

\subsection{Parity sector contraction of a single edge}
\label{sec:single-edge-contraction}

Consider an edge \((i,j)\) of a zero-field signed Ising objective \(H\). We choose \(\eta\in\{\pm1\}\) and impose the relative parity constraint \(z_iz_j=\eta\). The pair then has one degree of freedom, parameterized by \(s\). We define the lift map \(\Lambda_\eta\), which maps \(s\) and the remaining spins \(z_{\rm rest}\) to an assignment of \(H\), by
\begin{equation}
\begin{aligned}
(\Lambda_\eta(s,z_{\rm rest}))_i&=s,\\
(\Lambda_\eta(s,z_{\rm rest}))_j&=\eta s,\\
(\Lambda_\eta(s,z_{\rm rest}))_k&=z_k,
\qquad k\notin\{i,j\}.
\end{aligned}
\label{eq:parity-embedding}
\end{equation}
By Eq.~\eqref{eq:parity-embedding} and \(s^2=1\), every assignment produced by \(\Lambda_\eta\) satisfies \(z_iz_j=\eta\). Conversely, any assignment \(z\) satisfying this constraint has \(z_j=\eta z_i\). Setting \(s=z_i\) and taking the remaining components as \(z_{\rm rest}\) gives \(\Lambda_\eta(s,z_{\rm rest})=z\). Therefore, \(\operatorname{im}(\Lambda_\eta)\) consists exactly of the assignments satisfying this constraint, and this set is the parity sector labeled by \(\eta\). Composing \(H\) with \(\Lambda_\eta\) gives the reduced objective \(H(\Lambda_\eta(s,z_{\rm rest}))\) in \(s\) and \(z_{\rm rest}\). We call this reduction parity sector contraction. Notably, the same substitution rule also appears in the recursive QAOA of Ref.~\cite{bravyi2020obstacles} as a variable elimination step. The difference is that Ref.~\cite{bravyi2020obstacles} selects an edge with the largest correlation magnitude, takes \(\eta\) to be the sign of that correlation, and eliminates one variable per recursion step, whereas here the rule is applied to all edges of a matching, contracting several edges in one step.

\begin{proposition}[Parity sector contraction identity]
\label{prop:sector-exact}
Let $H(z)=\kappa+\sum_{a<b}J_{ab}\,z_a z_b$ be a zero-field signed Ising objective.
Contracting edge $(i,j)$ in sector $\eta$ gives
\begin{equation}
\begin{aligned}
	H(\Lambda_\eta(s,z_{\rm rest}))
	&=
	\kappa'
	+\sum_{k\notin\{i,j\}}
	\bigl(J_{ik}+\eta J_{jk}\bigr)\,s\,z_k\\
	&\quad
	+\sum_{\substack{k<\ell\\k,\ell\notin\{i,j\}}}
	J_{k\ell}\,z_kz_\ell,
\end{aligned}
\label{eq:sector-exact}
\end{equation}
where $\kappa'=\kappa+J_{ij}\eta$.
The contracted objective remains in the zero-field signed Ising class.
\end{proposition}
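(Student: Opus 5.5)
The plan is a direct substitution argument, in the same spirit as the proof of Proposition~\ref{prop:merge}. We substitute the lift map \(\Lambda_\eta\) of Eq.~\eqref{eq:parity-embedding}, namely \(z_i=s\), \(z_j=\eta s\), and \(z_k\) unchanged for \(k\notin\{i,j\}\), into \(H(z)=\kappa+\sum_{a<b}J_{ab}z_az_b\). We then sort the pairs \(\{a,b\}\) according to how many of their endpoints lie in \(\{i,j\}\). This gives three disjoint classes that exhaust all pairs:
\begin{itemize}[leftmargin=*]
\item the internal pair \(\{i,j\}\) itself;
\item boundary pairs \(\{i,k\}\) or \(\{j,k\}\) with \(k\notin\{i,j\}\);
\item external pairs \(\{k,\ell\}\) with \(k,\ell\notin\{i,j\}\).
\end{itemize}
Because we use the symmetric convention \(J_{ab}=J_{ba}\), we may write the boundary contributions as \(J_{ik}\) and \(J_{jk}\) without caring about index order.

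For the internal pair, the term becomes \(J_{ij}\,s\cdot\eta s=J_{ij}\eta s^2=J_{ij}\eta\), using \(s^2=1\). This is a constant, and it combines with \(\kappa\) to give \(\kappa'=\kappa+J_{ij}\eta\). For the boundary pairs, each \(k\notin\{i,j\}\) contributes
\[
J_{ik}sz_k+J_{jk}\eta sz_k=(J_{ik}+\eta J_{jk})\,s\,z_k .
\]
This is a single two-body product between the effective spin \(s\) and \(z_k\), with a real coefficient. The external pairs are untouched by \(\Lambda_\eta\), so they reproduce the last sum in Eq.~\eqref{eq:sector-exact} verbatim. Adding the three classes yields Eq.~\eqref{eq:sector-exact}.

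Closure in the zero-field signed Ising class follows by reading off the form of Eq.~\eqref{eq:sector-exact}. Relabel the variables as \(\{s\}\cup(V\setminus\{i,j\})\). Set the new couplings to \(J'_{sk}=J_{ik}+\eta J_{jk}\) and \(J'_{k\ell}=J_{k\ell}\), and take the constant to be \(\kappa'\). The resulting expression is exactly of the form of Eq.~\eqref{eq:signed-ising}. It contains no term linear in a single spin, since every surviving non-constant term is a product of two distinct spins. The coefficient \(J'_{sk}\) may be zero if \(J_{ik}=-\eta J_{jk}\); this simply means the pair is uncoupled and is allowed by the convention.

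There is no real obstacle in this argument. The only point requiring care is bookkeeping: we must check that the three classes partition the pairs exactly once. In particular, we must make sure that no boundary pair is double counted when \(k\) is adjacent to both \(i\) and \(j\), because then both \(J_{ik}\) and \(J_{jk}\) fold into the single coefficient of \(sz_k\). The internal term is the one place where a one-body field could have appeared, as shown in the discussion around Eq.~\eqref{eq:internal-field-leakage}. Here it collapses to a constant precisely because both endpoints are assigned \(\pm s\), so that \(s^2=1\), and this is what guarantees zero-field closure.
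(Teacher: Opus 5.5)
Your proposal is correct and follows the paper's proof essentially verbatim. You substitute $z_i=s$, $z_j=\eta s$ and split the terms into the internal edge (which becomes the constant $J_{ij}\eta$ via $s^2=1$), the boundary terms (which combine into $(J_{ik}+\eta J_{jk})\,s\,z_k$), and the unaffected external terms, with your remarks on relabeling and possibly vanishing $J'_{sk}$ being harmless bookkeeping the paper leaves implicit.
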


\begin{proof}
Substituting $z_i=s$ and $z_j=\eta s$ into $H(z)$ produces three groups of terms.
The internal edge becomes $J_{ij}\,s\,\eta s=J_{ij}\eta$, which is independent of $s$ and absorbed into $\kappa'$.
For each external vertex $k\notin\{i,j\}$, the two incident terms combine as $J_{ik}\,s\,z_k+J_{jk}\,\eta s\,z_k=J'_{sk}\,s\,z_k$, where $J'_{sk}=J_{ik}+\eta J_{jk}$.
Because the combined term is a two-body product, no one-body field is generated.
Terms not incident to $i$ or $j$ are unaffected.
Therefore, every term in the contracted objective is either a constant or a two-body product, confirming zero-field signed Ising closure.
\end{proof}

Take the columns of \(\omega\) to be the two computational basis states retained in the sector. Then \(z_i=s\) and \(z_j=\eta s\) give \(\omega^\dagger Z_i\omega=Z_s\) and \(\omega^\dagger Z_j\omega=\eta Z_s\). Therefore, this representation has \(\mu_I=0\) and \(\mu_Z=\eta\), hence \(|\mu_Z|=1\). These are the two conditions identified at the end of Sec.~\ref{sec:obstruction}. Figure~\ref{fig:incompatibility}(b) contrasts parity sector contraction with a projection for which \(\mu_I\ne0\).

\begin{proposition}[Uniqueness of deterministic zero-field reduction of a single edge]
\label{prop:pair-uniqueness}
Let \(H\) be a zero-field signed Ising objective, and let \((i,j)\) be an edge with \(J_{ij}\ne0\). Consider any reduction that maps the two values of \(s\) to two distinct assignments of \((z_i,z_j)\) and leaves all other spins unchanged. Suppose that substituting these assignments into \(H\) yields a zero-field signed Ising objective in \(s\) and the remaining spins. Up to the relabeling \(s\mapsto-s\), every such reduction has the form \(z_i=s\) and \(z_j=\eta s\), where \(\eta\in\{\pm1\}\).
\end{proposition}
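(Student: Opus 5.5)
The reduction is determined by two distinct assignments $a=(a_i,a_j)$ (for $s=+1$) and $b=(b_i,b_j)$ (for $s=-1$) in $\{\pm1\}^2$. The plan is to write the substituted spins as affine functions of $s$ and read off the one-body coefficient generated by the internal edge. Any function $f:\{\pm1\}\to\{\pm1\}$ can be written as $f(s)=\alpha+\beta s$ with $\alpha=\tfrac12(f(+1)+f(-1))$ and $\beta=\tfrac12(f(+1)-f(-1))$. Moreover, exactly one of $\alpha,\beta$ is nonzero, and it equals $\pm1$: either $f$ is constant, or $f(s)=\pm s$. So we write
\[
z_i=\alpha_i+\beta_i s,\qquad z_j=\alpha_j+\beta_j s,
\]
with $\alpha_i\beta_i=0$ and $\alpha_j\beta_j=0$.

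We substitute these into $H$ and collect the terms linear in $s$ alone. Only the internal edge $J_{ij}z_iz_j$ can contribute such a term, because the other terms are $J_{ik}z_iz_k$, $J_{jk}z_jz_k$, or $J_{k\ell}z_kz_\ell$ with $k,\ell\notin\{i,j\}$. These yield only constants, one-body terms in $z_k$, or products $s z_k$, never a bare $s$. Expanding,
\[
J_{ij}(\alpha_i+\beta_i s)(\alpha_j+\beta_j s)=J_{ij}\bigl(\alpha_i\alpha_j+\beta_i\beta_j\bigr)+J_{ij}\bigl(\alpha_i\beta_j+\alpha_j\beta_i\bigr)s,
\]
using $s^2=1$. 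Since the distinct monomials $1,s,z_k,sz_k,z_kz_\ell$ are linearly independent as functions on $\{\pm1\}^{n}$ (Fourier/multilinear uniqueness), zero-field closure forces the coefficient of $s$ to vanish. Because $J_{ij}\ne0$, this gives $\alpha_i\beta_j+\alpha_j\beta_i=0$.

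The next step is a case analysis. Since $a\ne b$, at least one of $z_i,z_j$ depends on $s$. If exactly one does, say $\beta_i=0$ with $\alpha_i=\pm1$ and $\beta_j=\pm1$, then $\alpha_i\beta_j+\alpha_j\beta_i=\alpha_i\beta_j=\pm1\neq0$, a contradiction; the symmetric case fails the same way. Hence both depend on $s$, so $\alpha_i=\alpha_j=0$ and $z_i=\beta_i s$, $z_j=\beta_j s$ with $\beta_i,\beta_j\in\{\pm1\}$. Relabeling $s\mapsto\beta_i s$ gives $z_i=s$ and $z_j=\eta s$ with $\eta=\beta_i\beta_j$.

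This uses only the internal edge and is exactly the mechanism described in Sec.~\ref{sec:obstruction}. Boundary terms are not needed, although when $z_j$ is constant they give an additional field $\alpha_jJ_{jk}z_k$. The main point requiring care is the justification that "yields a zero-field objective" means the unique multilinear representation has no one-body terms. Only with that reading can a nonzero $s$-coefficient not be cancelled by rewriting. I will state the linear independence of the monomial basis explicitly at that step.
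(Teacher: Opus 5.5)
Your proof is correct and follows the same argument as the paper. If exactly one endpoint depends on \(s\), the internal edge produces a bare \(\pm J_{ij}s\) term that no other term can cancel, so both endpoints must flip and a relabeling gives \(z_i=s\), \(z_j=\eta s\); your affine parametrization \(z=\alpha+\beta s\) and the appeal to uniqueness of the multilinear representation make explicit the step the paper leaves implicit, namely that this \(s\)-coefficient cannot be rewritten away.
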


\begin{proof}
Because the two retained assignments are distinct, at least one of \(z_i\) and \(z_j\) changes sign between \(s=+1\) and \(s=-1\). Suppose that exactly one endpoint changes sign. The internal term \(J_{ij}z_iz_j\) then reduces to \(\pm J_{ij}\,s\). Every boundary term becomes a multiple of \(s z_k\) or \(z_k\). Terms not incident to \(i\) or \(j\) are unchanged. Hence no other term contributes a monomial proportional to \(s\) alone. Therefore, the reduced objective contains the one-body field \(\pm J_{ij}\,s\), which contradicts zero-field closure because \(J_{ij}\ne0\). Hence both endpoints change sign. The two retained assignments of \((z_i,z_j)\) are then global flips of one another. Relabeling \(s\) if necessary gives \(z_i=s\) and \(z_j=\eta s\).
\end{proof}

The zero-field structure is equivalent to invariance of \(H\) under the global flip \(z\mapsto-z\). This flip splits the four configurations of \((z_i,z_j)\) into two orbits, and the two parity sectors each cover one orbit on this pair. Equation~\eqref{eq:parity-embedding} gives \(\Lambda_\eta(-s,-z_{\rm rest})=-\Lambda_\eta(s,z_{\rm rest})\), that is, the global flip of the original variables corresponds to the global flip of the reduced variables. Therefore, parity sector contraction preserves the symmetry on which Proposition~\ref{prop:merge} relies. By Proposition~\ref{prop:sector-exact}, the reduced objective remains in the zero-field signed Ising class, and the block flip merge applies to it unchanged.

\subsection{Matched contractions and iterated closure}
\label{sec:matched-closure}

Set $H^{(0)}=H$. For each $\ell$, let $V^{(\ell)}$ denote the variable set of $H^{(\ell)}$, call its elements effective vertices, and write $s_A$ for the spin at $A\in V^{(\ell)}$, with $s_u=z_u$ for $u\in V^{(0)}$. The edges of $H^{(\ell)}$ form the effective graph $G^{(\ell)}$ on $V^{(\ell)}$. To obtain $H^{(\ell+1)}$ from $H^{(\ell)}$, we choose a matching $M$ in $G^{(\ell)}$, select a parity $\eta_e$ for each $e\in M$, and contract all edges of $M$ simultaneously. Effective vertices not covered by $M$ are carried unchanged into $V^{(\ell+1)}$.

\begin{proposition}[Closure under matched parity sector contraction]
\label{prop:matched-closure}
Let $H$ be a zero-field signed Ising objective and $M$ any matching consisting of edges of $H$. For any assignment of parity sectors to the edges of $M$, the objective obtained by contracting all edges of $M$ simultaneously remains in the zero-field signed Ising class.
\end{proposition}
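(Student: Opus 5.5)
The plan is to reduce the simultaneous contraction to a sequence of single-edge contractions and apply Proposition~\ref{prop:sector-exact} repeatedly. Write $M=\{(i_1,j_1),\ldots,(i_m,j_m)\}$ with assigned parities $\eta_1,\ldots,\eta_m$. The simultaneous lift substitutes $z_{i_t}=s_t$ and $z_{j_t}=\eta_t s_t$ for every $t$, and leaves all uncovered spins unchanged. Because $M$ is a matching, the pairs $\{i_t,j_t\}$ are pairwise disjoint. Hence this substitution is the composition $\Lambda^{(1)}_{\eta_1}\circ\cdots\circ\Lambda^{(m)}_{\eta_m}$ of single-edge lifts, each acting on its own coordinates. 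The order is irrelevant, since the lifts act on disjoint variables and therefore commute.

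The induction runs on $t$. Let $H_0=H$, and let $H_t$ be the objective after contracting the first $t$ edges. Suppose $H_{t-1}$ is zero-field signed Ising. Its variables $i_t$ and $j_t$ are still original, uncontracted spins, because they are disjoint from the earlier pairs. So Proposition~\ref{prop:sector-exact} applies to the pair $(i_t,j_t)$ in $H_{t-1}$ and gives that $H_t$ is zero-field signed Ising.

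One subtlety needs care. Proposition~\ref{prop:sector-exact} is stated for an edge with the pair coupling present, but its proof never uses $J_{ij}\ne0$. After earlier contractions, the coupling between $i_t$ and $j_t$ is unchanged, since contractions only modify couplings incident to contracted pairs. It therefore still equals the original $J_{i_tj_t}$. In any case the identity \eqref{eq:sector-exact} holds for arbitrary real couplings, including zero.

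As an alternative check, one can substitute directly and classify each term $J_{ab}z_az_b$ of $H$ into three cases:
\begin{itemize}
\item If $\{a,b\}$ is a matched pair, the term becomes the constant $J_{ab}\eta_t$.
\item If $a$ and $b$ lie in distinct matched pairs or are uncovered, it becomes $\pm J_{ab}$ times a product of two distinct effective spins.
\item No other case exists, because each spin is replaced by $\pm$ a single effective spin, and distinct pairs give distinct effective spins.
\end{itemize}
So every term is either constant or a two-body product of distinct effective spins, and no one-body field arises.

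The only real obstacle is confirming that distinct original vertices in different matched pairs never collapse onto the same effective spin in a way that yields $s^2$ or a field. This is exactly where the matching property (disjointness) is used. Collecting coefficients gives effective couplings $J'_{AB}=\sum_{a\in A,b\in B}\epsilon_a\epsilon_b J_{ab}$, with signs $\epsilon\in\{1,\eta_t\}$, together with the constant $\kappa+\sum_t J_{i_tj_t}\eta_t$.
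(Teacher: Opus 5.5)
Your proposal is correct and follows essentially the same route as the paper: it decomposes the simultaneous contraction into sequential single-edge contractions on disjoint pairs, notes that earlier contractions leave the later matched couplings unchanged, and applies Proposition~\ref{prop:sector-exact} at each step. Your additional direct-substitution check is also sound, and it matches the cluster-coupling formula, Eq.~\eqref{eq:block-coupling}, that the paper states after the proposition.
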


\begin{proof}
The edges of $M$ share no endpoints, so their single-edge contractions act on disjoint pairs of spins. By Eq.~\eqref{eq:sector-exact}, contracting one edge leaves every coupling not incident to its endpoints unchanged, so the other edges of $M$ remain edges after that step. Applying these substitutions one at a time gives the same reduced objective as applying them simultaneously. By Proposition~\ref{prop:sector-exact}, every single-edge contraction in the sequential construction leaves the objective in the zero-field signed Ising class, and so does the simultaneous contraction.
\end{proof}

By Proposition~\ref{prop:matched-closure}, one matched contraction of a zero-field signed Ising objective produces another objective in the same class, so matched contractions can be applied repeatedly. Let $M_\ell$ be the matching used at step $\ell$, let $\Lambda_\ell$ be the combination of the lift maps in Eq.~\eqref{eq:parity-embedding} over the edges of $M_\ell$, and let $\kappa_\ell$ be the sum of the new additive constants that these edges produce in Eq.~\eqref{eq:sector-exact}. We define the objective after this step by
\begin{equation}
H^{(\ell+1)}
=
H^{(\ell)}\circ\Lambda_\ell-\kappa_\ell.
\label{eq:level-recursion}
\end{equation}
The composite lift after $\ell$ steps is $\Lambda_{0:\ell}=\Lambda_0\circ\Lambda_1\circ\cdots\circ\Lambda_{\ell-1}$, and its image $\mathcal F_\ell=\operatorname{im}(\Lambda_{0:\ell})$ in the original spin space is the selected sector family. We call this iterative contraction and lifting procedure parity renormalization.

For any assignment $s_\ell$ of $H^{(\ell)}$, applying Eq.~\eqref{eq:parity-embedding} to each factor of $\Lambda_{0:\ell}$ in turn shows that the lifted assignment $z=\Lambda_{0:\ell}(s_\ell)$ satisfies, for every $u\in V^{(0)}$,
\begin{equation}
z_u=\alpha_u\,s_A,\qquad \alpha_u\in\{\pm1\},
\label{eq:cluster-lift}
\end{equation}
where $A\in V^{(\ell)}$ and the sign $\alpha_u$ are determined by the contractions involving $u$. For each $A\in V^{(\ell)}$, we write $V_A^{(0)}$ for the cluster of all $u\in V^{(0)}$ associated with $A$ in Eq.~\eqref{eq:cluster-lift}, and these clusters partition $V^{(0)}$. At $\ell=0$ the map $\Lambda_{0:0}$ is the identity, so $V_u^{(0)}=\{u\}$ and $\alpha_u=1$. Substituting Eq.~\eqref{eq:cluster-lift} into $H^{(0)}$ turns each coupling within a cluster into a constant and each coupling between clusters into a two-body product, and the effective coupling between $A_1$ and $A_2$ is
\begin{equation}
J'_{A_1A_2}
=
\sum_{u\in V_{A_1}^{(0)}}
\sum_{v\in V_{A_2}^{(0)}}
\alpha_u\alpha_v\,J_{uv},
\qquad A_1\ne A_2.
\label{eq:block-coupling}
\end{equation}

\begin{lemma}[Iterated lift identity]
\label{lem:iterated-lift}
For any $L\ge1$ and every assignment $s_L$ of $H^{(L)}$,
\begin{equation}
H^{(0)}(\Lambda_{0:L}(s_L))
=
H^{(L)}(s_L)
+
\sum_{\ell=0}^{L-1}\kappa_\ell.
\label{eq:iterated-lift}
\end{equation}
Lifting an exact minimizer of $H^{(L)}$ through $\Lambda_{0:L}$ gives an exact minimizer of $H^{(0)}$ over $\mathcal F_L$.
\end{lemma}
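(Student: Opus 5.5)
The plan is to prove the identity by induction on $L$. The base case and the inductive step both rest on the single-step relation Eq.~\eqref{eq:level-recursion}. The minimizer statement then follows from the identity together with the definition of $\mathcal F_L$.

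\textbf{Base case $L=1$.} By Proposition~\ref{prop:matched-closure} and its sequential-substitution argument, contracting the matching $M_0$ is the same as applying the single-edge substitutions of Proposition~\ref{prop:sector-exact} one edge at a time. Each such substitution contributes exactly its constant $J_{ij}\eta_{ij}$ and introduces no other $s$-independent terms. Summing these constants over $e\in M_0$ gives $\kappa_0$. Hence $H^{(0)}(\Lambda_0(s_1))=H^{(1)}(s_1)+\kappa_0$, which is Eq.~\eqref{eq:level-recursion} evaluated at $s_1$.

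\textbf{Inductive step.} Assume the identity holds for $L$, and fix $s_{L+1}$. Set $s_L=\Lambda_L(s_{L+1})$, which is an assignment of $H^{(L)}$. Since $\Lambda_{0:L+1}=\Lambda_{0:L}\circ\Lambda_L$, the inductive hypothesis gives
\begin{equation*}
H^{(0)}(\Lambda_{0:L+1}(s_{L+1}))=H^{(L)}(\Lambda_L(s_{L+1}))+\sum_{\ell=0}^{L-1}\kappa_\ell .
\end{equation*}
Eq.~\eqref{eq:level-recursion} at level $L$ then rewrites $H^{(L)}(\Lambda_L(s_{L+1}))$ as $H^{(L+1)}(s_{L+1})+\kappa_L$. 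This closes the induction. The only care needed is the composition order: $\Lambda_L$ acts first on the reduced variables, and the result is fed into $\Lambda_{0:L}$.

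\textbf{Minimizer statement.} Put $c=\sum_{\ell}\kappa_\ell$. This is a constant independent of $s_L$. Every $z\in\mathcal F_L$ equals $\Lambda_{0:L}(s_L)$ for some $s_L$, by the definition of the image. So for any $z$ in $\mathcal F_L$,
\begin{equation*}
H^{(0)}(z)=H^{(L)}(s_L)+c\ \ge\ \min H^{(L)}+c=H^{(0)}(\Lambda_{0:L}(s_L^\ast)),
\end{equation*}
where $s_L^\ast$ is any exact minimizer of $H^{(L)}$. The lifted point $\Lambda_{0:L}(s_L^\ast)$ itself lies in $\mathcal F_L$. Hence it minimizes $H^{(0)}$ over $\mathcal F_L$.

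Injectivity of $\Lambda_{0:L}$ is not needed, only surjectivity onto $\mathcal F_L$, which holds by definition. The only delicate step is the bookkeeping in the base case: one must confirm that $\kappa_\ell$ collects exactly the new constants. Note that the pre-existing constant of $H^{(\ell)}$ is carried into $H^{(\ell+1)}$ and is not part of $\kappa_\ell$. This follows directly from Eq.~\eqref{eq:sector-exact}, where $\kappa'=\kappa+J_{ij}\eta$.
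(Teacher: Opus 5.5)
Your proof is correct and follows the paper's argument: the paper's phrase ``applying Eq.~\eqref{eq:level-recursion} successively for \(\ell=0,\ldots,L-1\)'' is exactly your induction via \(\Lambda_{0:L+1}=\Lambda_{0:L}\circ\Lambda_L\), and its minimizer statement rests on the same two facts you use, surjectivity onto \(\mathcal F_L\) and the constancy of \(\sum_\ell\kappa_\ell\). One simplification is available: your base-case bookkeeping is unnecessary, because Eq.~\eqref{eq:level-recursion} defines \(H^{(1)}=H^{(0)}\circ\Lambda_0-\kappa_0\), so the \(L=1\) identity holds by definition.
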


\begin{proof}
By Eq.~\eqref{eq:level-recursion}, $H^{(\ell)}\circ\Lambda_\ell=H^{(\ell+1)}+\kappa_\ell$. Applying this relation successively for $\ell=0,\ldots,L-1$ to $H^{(0)}\circ\Lambda_{0:L}$ gives Eq.~\eqref{eq:iterated-lift}. Every assignment in $\mathcal F_L$ has the form $\Lambda_{0:L}(s_L)$ for some assignment $s_L$ of $H^{(L)}$. Because $\sum_{\ell=0}^{L-1}\kappa_\ell$ does not depend on $s_L$, lifting a minimizer of $H^{(L)}$ through $\Lambda_{0:L}$ attains the minimum of $H^{(0)}$ over $\mathcal F_L$.
\end{proof}

\subsection{Sector probabilities from QAOA parity statistics}
\label{sec:qaoa-parity-statistics}

For an edge $(i,j)$ in $G^{(\ell)}$, let $c_{ij}=\langle Z_iZ_j\rangle$ denote the $ZZ$ correlation in a QAOA output state. This state may be prepared for $H^{(\ell)}$ itself or for its restriction to a subgraph of $G^{(\ell)}$ containing $(i,j)$. The computational-basis distribution of this state gives the sector probability $q_{ij}(\eta)=\Pr(s_i s_j=\eta)=(1+\eta c_{ij})/2$. For $c_{ij}\ne0$, the sign of $c_{ij}$ identifies the more probable sector, and $|c_{ij}|$ is the gap between the two sector probabilities.

For $u\in V_i^{(0)}$ and $v\in V_j^{(0)}$, Eq.~\eqref{eq:cluster-lift} gives $z_uz_v=\alpha_u\alpha_v s_is_j$ under the lift $\Lambda_{0:\ell}$. Taking expectations over the QAOA output distribution gives $\mathbb E[z_uz_v]=\alpha_u\alpha_v\,\mathbb E[s_is_j]=\alpha_u\alpha_v c_{ij}$.

Parity sector contraction uses only the computational-basis samples already taken for energy estimation, whereas coarse graining based on reduced density matrices requires additional measurement bases. Because $Z_iZ_j$ is diagonal in the computational basis, every sample from the QAOA output state at a given parameter setting yields $s_i s_j$ for all edges represented in the circuit. Thus, the same sample batch used for energy estimation also supplies estimates of the sector probabilities, without measurement-basis rotations~\cite{crawford2021pauli} or quantum-state tomography~\cite{james2001measurement}. By contrast, coarse graining based on reduced density matrices requires measurements in multiple Pauli bases to reconstruct the two-qubit reduced density matrix from up to $15$ nontrivial two-qubit Pauli-product expectation values before selecting a two-dimensional retained subspace~\cite{zou2025mqaoa}.

\subsection{Extension of the effective range}
\label{sec:effective-range}

By the identity $\mathbb E[z_uz_v]=\alpha_u\alpha_v c_{ij}$, all pairwise correlations between $V_i^{(0)}$ and $V_j^{(0)}$ are fixed by the effective correlation $c_{ij}$ up to signs determined by the earlier contractions. Therefore, repeated parity sector contraction maps local parity information on the effective graph to information that spans a longer range after lifting. A depth-$p$ QAOA circuit on $G^{(\ell)}$ has a finite light cone~\cite{hastings2019local}. For effective vertices $A_1,A_2\in V^{(\ell)}$, the statistic $\langle Z_{A_1}Z_{A_2}\rangle$ depends only on the neighborhood of radius $p$ around $A_1$ and $A_2$ in $G^{(\ell)}$. Lifting this neighborhood to the original variables gives the region
\begin{equation}
\mathcal R_p^{(\ell)}(A_1,A_2)
=
\bigcup_{A_3\in N_{G^{(\ell)}}^p(\{A_1,A_2\})}
V_{A_3}^{(0)}.
\label{eq:effective-range}
\end{equation}
Because each effective vertex in this neighborhood represents an entire cluster of original variables, the lifted region can include original variables outside the depth-$p$ light cone in $G^{(0)}$, with the growth under repeated contraction quantified in Appendix~\ref{app:effective-range-coverage}.

The contractions that produce this extension also fix parity sectors, so the selected sector family $\mathcal F_L$ is a subset of $\{\pm1\}^{V^{(0)}}$. Therefore, the minimum of $H^{(0)}$ over $\mathcal F_L$ is at least its global minimum, with equality exactly when $\mathcal F_L$ contains a global minimizer.

\section{Hierarchy with risk budgets and lower bounds}
\label{sec:risk-hierarchy}

Parity sector contraction is exact within a fixed sector, whereas the selected sector family need not contain a global minimizer. We first bound the loss caused by this restriction for a single edge and a matching, and use the repair risks and sector probabilities to state the conditions under which a matching and its parities form an admissible selection. We then derive a deterministic lower bound on the cut contribution from edges between blocks and combine it with the cut fraction inside each block to obtain a merge lower bound at a hierarchy level. Finally, we accumulate the repair risks across contraction steps and combine them with the merge lower bound to obtain a global cut lower bound that does not require \(B\le Q\).

\subsection{Repair risk for parity sector choices}
\label{sec:risk}

For a zero-field signed Ising objective \(H(z)\), consider an edge \((i,j)\) and define the minimum external sensitivity of its endpoints by
\begin{equation}
S_{ij}^{\min}
=
\min\!\left\{
\sum_{k\ne i,j}|J_{ik}|,
\sum_{k\ne i,j}|J_{jk}|
\right\}.
\label{eq:external-sensitivities}
\end{equation}
Flipping an endpoint \(u\in\{i,j\}\) reverses the sign of every term \(J_{uk}z_uz_k\) with \(k\ne i,j\), so the absolute change in the total contribution of these terms is at most \(2\sum_{k\ne i,j}|J_{uk}|\). Therefore, if \(u\) attains \(S_{ij}^{\min}\), flipping \(u\) changes this part of the energy by at most \(2S_{ij}^{\min}\) in absolute value.

For a candidate parity \(\eta\), let \(z\) be any assignment satisfying \(z_i z_j=-\eta\), and let \(z'\) be obtained from \(z\) by flipping an endpoint \(u\) that attains \(S_{ij}^{\min}\). Then \(z'_iz'_j=\eta\), the term on edge \((i,j)\) changes by \(2J_{ij}\eta\), and the energy change satisfies
\begin{equation}
H(z')-H(z)
\le
2\bigl(S_{ij}^{\min}+J_{ij}\eta\bigr).
\label{eq:repair-step}
\end{equation}
We call such a flip a repair of edge \((i,j)\) to sector \(\eta\) and define the corresponding repair risk by
\begin{equation}
R_{ij}(\eta)
=
2\bigl[S_{ij}^{\min}+J_{ij}\eta\bigr]_+,
\qquad
[x]_+=\max\{x,0\}.
\label{eq:repair-risk}
\end{equation}

\begin{theorem}[Repair bound for an arbitrary parity]
\label{thm:repair}
Let \(m\) be the minimum value of \(H\), and \(m_{ij,\eta}\) its minimum value under the constraint \(z_i z_j=\eta\). Then \(m_{ij,\eta}\) satisfies
\begin{equation}
0
\le
m_{ij,\eta}-m
\le
R_{ij}(\eta).
\label{eq:repair-bound}
\end{equation}
If \(S_{ij}^{\min}+J_{ij}\eta<0\), every global minimizer satisfies \(z_i z_j=\eta\).
\end{theorem}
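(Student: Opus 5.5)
The lower bound $0\le m_{ij,\eta}-m$ is immediate: the parity sector $\{z: z_iz_j=\eta\}$ is a subset of $\{\pm1\}^V$, so minimizing over it cannot go below the global minimum. For the upper bound, I will split into cases according to which sector contains a global minimizer. Fix a global minimizer $z^\star$ with $H(z^\star)=m$. If $z^\star_iz^\star_j=\eta$, then $m_{ij,\eta}=m$ and the bound holds since $R_{ij}(\eta)\ge0$.

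Otherwise $z^\star_iz^\star_j=-\eta$, and I apply the repair step of Eq.~\eqref{eq:repair-step}. Let $z'$ be obtained from $z^\star$ by flipping an endpoint $u\in\{i,j\}$ that attains $S_{ij}^{\min}$. Then $z'$ lies in sector $\eta$. To make the energy change explicit, I decompose $H$ into three parts. Terms not touching $u$ are unchanged. The internal term goes from $-J_{ij}\eta$ to $J_{ij}\eta$, a change of $2J_{ij}\eta$. The boundary terms $J_{uk}z_uz_k$ with $k\ne i,j$ change by $-2J_{uk}z^\star_uz^\star_k$, whose total is at most $2\sum_{k\ne i,j}|J_{uk}|=2S_{ij}^{\min}$ in absolute value. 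Hence
\[
m_{ij,\eta}\le H(z')\le m+2\bigl(S_{ij}^{\min}+J_{ij}\eta\bigr)\le m+R_{ij}(\eta),
\]
where the last step uses $x\le[x]_+$.

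For the final claim, suppose $S_{ij}^{\min}+J_{ij}\eta<0$. Take any global minimizer $z^\star$ and argue by contradiction: assume $z^\star_iz^\star_j=-\eta$. The same repair produces $z'$ with
\[
H(z')\le m+2\bigl(S_{ij}^{\min}+J_{ij}\eta\bigr)<m,
\]
which contradicts the minimality of $m$. So every global minimizer lies in sector $\eta$.

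The only point needing care is the last one. It requires the sharp, un-clipped inequality of Eq.~\eqref{eq:repair-step}, not the clipped risk $R_{ij}(\eta)$, and it must be applied to an arbitrary minimizer rather than a chosen one. There is no real obstacle, because the bookkeeping of the energy change under a single flip is exact.
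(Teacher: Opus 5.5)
Your proposal is correct and follows essentially the same argument as the paper. Both use the lower bound from restricting the feasible set, a case split on whether a global minimizer already lies in sector $\eta$, the single-endpoint repair of Eq.~\eqref{eq:repair-step} together with $x\le[x]_+$, and a contradiction from the unclipped inequality for the case $S_{ij}^{\min}+J_{ij}\eta<0$; the only cosmetic difference is that the paper handles that case first, whereas you derive the upper bound uniformly in both cases.
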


\begin{proof}
Because the constraint \(z_i z_j=\eta\) restricts the feasible set, \(m_{ij,\eta}\ge m\). If \(S_{ij}^{\min}+J_{ij}\eta<0\), Eq.~\eqref{eq:repair-step} implies that, for any violating global minimizer, repairing edge \((i,j)\) to sector \(\eta\) would produce an assignment of strictly lower energy, contradicting global optimality. Therefore, every global minimizer satisfies \(z_i z_j=\eta\), and \(m_{ij,\eta}=m\). It remains to consider \(S_{ij}^{\min}+J_{ij}\eta\ge0\). Let \(z^*\) be a global minimizer. If \(z_i^*z_j^*=\eta\), then \(m_{ij,\eta}=m\). Otherwise, \(z_i^*z_j^*=-\eta\), and the repair of edge \((i,j)\) to sector \(\eta\) produces an assignment \(\tilde z\) with \(\tilde z_i\tilde z_j=\eta\). By Eq.~\eqref{eq:repair-step} and Eq.~\eqref{eq:repair-risk}, \(H(\tilde z)-H(z^*)\le R_{ij}(\eta)\). Therefore, \(m_{ij,\eta}\le H(\tilde z)\le m+R_{ij}(\eta)\).
\end{proof}

\begin{corollary}[Criterion for strong edges]
\label{cor:strong-edge}
For \(J_{ij}\ne0\), let \(\eta_E=-\operatorname{sgn}(J_{ij})\), the parity that minimizes the term on edge \((i,j)\). Then \(J_{ij}\eta_E=-|J_{ij}|\), and the repair risk in sector \(\eta_E\) is
\begin{equation}
R_{ij}(\eta_E)
=
2\bigl[S_{ij}^{\min}-|J_{ij}|\bigr]_+.
\end{equation}
We call edge \((i,j)\) strong when \(|J_{ij}|\ge S_{ij}^{\min}\). In this case, contracting edge \((i,j)\) in sector \(\eta_E\) has zero repair risk. If \(|J_{ij}|>S_{ij}^{\min}\), every global minimizer satisfies \(z_i z_j=\eta_E\).
\end{corollary}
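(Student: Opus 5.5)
The corollary follows by specializing Theorem~\ref{thm:repair} to the energetically preferred parity, so I would derive it by direct substitution into that theorem. The argument has four short steps, taken in the order of the claims.

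First, I would compute \(J_{ij}\eta_E\) for \(J_{ij}\ne0\) and \(\eta_E=-\operatorname{sgn}(J_{ij})\). Since \(\operatorname{sgn}(J_{ij})J_{ij}=|J_{ij}|\), we get \(J_{ij}\eta_E=-\operatorname{sgn}(J_{ij})J_{ij}=-|J_{ij}|\). This also confirms that \(\eta_E\) minimizes the edge term \(J_{ij}z_iz_j\) over \(z_iz_j\in\{\pm1\}\), in agreement with Eq.~\eqref{eq:signed-term-parity-indicator}. Substituting into the definition Eq.~\eqref{eq:repair-risk} gives \(R_{ij}(\eta_E)=2[S_{ij}^{\min}-|J_{ij}|]_+\) immediately.

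Second, for a strong edge, \(|J_{ij}|\ge S_{ij}^{\min}\) gives \(S_{ij}^{\min}-|J_{ij}|\le0\), so the positive part vanishes and \(R_{ij}(\eta_E)=0\). To interpret ``zero repair risk'' as no loss of optimal value, I would invoke Eq.~\eqref{eq:repair-bound}: \(0\le m_{ij,\eta_E}-m\le 0\), hence \(m_{ij,\eta_E}=m\). Lemma~\ref{lem:iterated-lift} (with a single contraction) and Proposition~\ref{prop:sector-exact} then show that contracting in sector \(\eta_E\) preserves the exact optimum.

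Third, for the strict case \(|J_{ij}|>S_{ij}^{\min}\), we have \(S_{ij}^{\min}+J_{ij}\eta_E=S_{ij}^{\min}-|J_{ij}|<0\). This is exactly the hypothesis of the final clause of Theorem~\ref{thm:repair}, which then asserts that every global minimizer satisfies \(z_iz_j=\eta_E\).

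No step is a genuine obstacle; the only care needed is the sign bookkeeping \(J_{ij}\eta_E=-|J_{ij}|\) and the distinction between the nonstrict inequality, which gives zero risk, and the strict one, which forces every minimizer into the sector.
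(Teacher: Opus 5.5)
Your proposal is correct and takes the same route as the paper, which gives no separate proof for this corollary and treats it as a direct specialization of the definition~\eqref{eq:repair-risk} and of Theorem~\ref{thm:repair} at \(\eta=\eta_E\), using \(J_{ij}\eta_E=-|J_{ij}|\). Your extra remark that \(R_{ij}(\eta_E)=0\) forces \(m_{ij,\eta_E}=m\) through Eq.~\eqref{eq:repair-bound} is correct but optional, since it only explains what ``zero repair risk'' guarantees.
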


In pseudo-Boolean optimization, the persistency property of roof duality can also determine the values of some variables at an optimum~\cite{hammer1984roof}.

\begin{corollary}[Risk for a matching]
\label{cor:matching-risk}
For a matching \(M\) with selected parities \(\eta_{ij}\), let \(m_M\) be the minimum value of \(H\) under the constraints \(z_i z_j=\eta_{ij}\) for all \((i,j)\in M\). Then \(m_M\) satisfies
\begin{equation}
m_M-m
\le
\sum_{(i,j)\in M} R_{ij}(\eta_{ij}).
\label{eq:matching-risk}
\end{equation}
\end{corollary}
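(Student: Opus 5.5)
The plan is to repair edges one at a time, starting from a global minimizer, and add up the energy changes. The key observation is that the single-edge repair inequality, Eq.~\eqref{eq:repair-step}, holds for every assignment \(z\) with \(z_iz_j=-\eta\), not only for a minimizer. The reason is that \(S_{ij}^{\min}\) and \(J_{ij}\) are coefficients of \(H\) itself and do not depend on the current configuration. So repairs can be applied one after another, and each step obeys the same bound.

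First, fix a global minimizer \(z^{*}\) of \(H\), so \(H(z^{*})=m\). Enumerate the edges of \(M\) as \(e_1,\ldots,e_r\), where \(e_t=(i_t,j_t)\) carries the selected parity \(\eta_t=\eta_{i_tj_t}\). Define a sequence \(z^{(0)}=z^{*},z^{(1)},\ldots,z^{(r)}\) as follows.
\begin{itemize}
\item If \(z^{(t-1)}_{i_t}z^{(t-1)}_{j_t}=\eta_t\), set \(z^{(t)}=z^{(t-1)}\). The energy change is \(0\le R_{e_t}(\eta_t)\).
\item Otherwise, let \(z^{(t)}\) be obtained by flipping an endpoint \(u_t\in\{i_t,j_t\}\) that attains \(S_{i_tj_t}^{\min}\). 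By Eq.~\eqref{eq:repair-step},
\begin{equation}
H(z^{(t)})-H(z^{(t-1)})\le 2\bigl(S_{i_tj_t}^{\min}+J_{i_tj_t}\eta_t\bigr)\le R_{i_tj_t}(\eta_t).
\end{equation}
The last inequality uses \(x\le[x]_+\). This also covers the case of negative risk argument, where the step strictly decreases the energy.
\end{itemize}

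Second, check that the final assignment is feasible. Because \(M\) is a matching, the flipped spin \(u_t\) is not an endpoint of any other edge \(e_{t'}\). Hence step \(t\) leaves the products \(z_{i_{t'}}z_{j_{t'}}\) unchanged for all \(t'\ne t\). By induction, \(z^{(t)}\) satisfies the constraints on \(e_1,\ldots,e_t\). In particular, \(z^{(r)}\) lies in the constrained set, so \(m_M\le H(z^{(r)})\). Telescoping the step bounds then gives
\begin{equation}
m_M-m\le H(z^{(r)})-H(z^{(0)})\le\sum_{(i,j)\in M}R_{ij}(\eta_{ij}),
\end{equation}
which is Eq.~\eqref{eq:matching-risk}.

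The main point to verify is that Eq.~\eqref{eq:repair-step} remains valid at intermediate, non-optimal configurations. Recheck it directly. The edge term \(J_{ij}z_iz_j\) changes from \(-J_{ij}\eta\) to \(J_{ij}\eta\), a change of exactly \(2J_{ij}\eta\). Each term \(J_{uk}z_uz_k\) with \(k\ne i,j\) flips sign, so their total change is at most \(2\sum_{k\ne i,j}|J_{uk}|=2S^{\min}_{ij}\) in absolute value. All other terms are untouched. None of this depends on \(z\) beyond the violated constraint, so the bound holds at every step.

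Note also that \(S^{\min}\) may count couplings to endpoints of other matched edges. This is harmless: those couplings are included in the bound, and disjointness of the matching is needed only to preserve previously repaired parities.
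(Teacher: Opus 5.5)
Your proof is correct and follows essentially the same route as the paper: start from a global minimizer, repair each violated matched edge once by flipping the endpoint attaining \(S_{ij}^{\min}\), and use the matching property to show that later repairs leave earlier parities intact. Then telescope the per-step bounds from Eq.~\eqref{eq:repair-step}, using \(x\le[x]_+\), with zero-cost steps for already-satisfied edges. Your explicit check that Eq.~\eqref{eq:repair-step} holds at arbitrary intermediate, non-optimal configurations is a point the paper uses only implicitly.
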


\begin{proof}
Let \(z^*\) be a global minimizer. Starting from \(z^*\), repair in turn every edge of \(M\) that violates its selected parity in the current assignment. Because \(M\) is a matching, repairing one edge does not change the current parity of any other edge of \(M\), so each edge of \(M\) is repaired at most once. Whenever an edge of \(M\) still violates its selected parity, choose such an edge \((i,j)\), for which the current assignment \(z\) satisfies \(z_i z_j=-\eta_{ij}\). By Eq.~\eqref{eq:repair-step} and Eq.~\eqref{eq:repair-risk}, repairing edge \((i,j)\) to sector \(\eta_{ij}\) produces an assignment \(z'\) with \(H(z')-H(z)\le R_{ij}(\eta_{ij})\). Summing these inequalities over the repaired edges and using \(R_{ij}(\eta_{ij})\ge0\) for the unrepaired edges of \(M\) shows that the final assignment \(\tilde z\) satisfies
\begin{equation}
H(\tilde z)-m
\le
\sum_{(i,j)\in M} R_{ij}(\eta_{ij}).
\label{eq:matching-repair-sum}
\end{equation}
The procedure ends when no edge of \(M\) violates its selected parity, so \(\tilde z\) satisfies every constraint of \(M\). Therefore, \(m_M\le H(\tilde z)\), and Eq.~\eqref{eq:matching-risk} follows.
\end{proof}

\subsection{Matching with risk budgets and the parity renormalization solver}
\label{sec:local-solver}

At each contraction step, the sector probability \(q_{ij}(\eta)\) and the repair risk \(R_{ij}(\eta)\) are computed for every candidate \(((i,j),\eta)\) on the current effective graph. Given a nonnegative risk budget \(T_R\) and a nonnegative threshold \(T_q\), an admissible selection consists of a matching \(M\) and parities \(\eta_{ij}\) satisfying
\begin{equation}
\sum_{(i,j)\in M} R_{ij}(\eta_{ij}) \le T_R,
\qquad
\sum_{(i,j)\in M}\bigl(1-q_{ij}(\eta_{ij})\bigr) \le T_q.
\label{eq:budgets}
\end{equation}
By Corollary~\ref{cor:matching-risk}, imposing the constraints associated with any admissible selection increases the minimum value of the current objective by at most \(T_R\). The matching rule used in the numerical calculations is specified in Appendix~\ref{app:sector-selection-configurations}.

If all \(q_{ij}(\eta_{ij})\) are marginals of a common QAOA output distribution on the current effective graph, the union bound gives
\begin{equation}
\Pr\!\left(
\bigcap_{(i,j)\in M}
\{s_is_j=\eta_{ij}\}
\right)
\ge
1-\sum_{(i,j)\in M}\bigl(1-q_{ij}(\eta_{ij})\bigr).
\label{eq:union-bound-probability}
\end{equation}
Together with the second constraint in Eq.~\eqref{eq:budgets}, the common output distribution assigns probability at least \(\bigl[1-T_q\bigr]_+\) to the event that all selected parity constraints hold, with finite-sample bounds given in Appendix~\ref{app:finite-sample}. When the \(q_{ij}\) are obtained from truncated local subproblems, they are marginals of the output distributions of different local problems and are not jointly evaluated under a common distribution. In that setting, the second constraint serves only as a heuristic selection condition, and \(T_q\) has no joint-probability interpretation. The deterministic conclusion drawn from the first constraint remains valid.

The matching rule and parity sector contraction are applied repeatedly until the effective problem contains at most \(Q\) variables. Correlations are estimated on QAOA subproblems containing at most \(Q\) variables. Therefore, every quantum call used for correlation estimation or for solving the final effective problem involves at most \(Q\) variables. We call the resulting solver parity-RG (PRG).

Let \(\Delta^{\rm solve}\) denote an upper bound on the amount by which the objective value of the solution returned for the final effective problem exceeds its minimum. We take \(\Delta^{\rm solve}=0\) when the final effective problem is solved exactly.

\subsection{Recursive hierarchy and merge lower bound}
\label{sec:merge-guard}

At a hierarchy level, suppose an instance is partitioned into blocks \(V_1,\ldots,V_h\) and each block is solved separately. The resulting block assignments are then combined into a global assignment by the block flip merge of Sec.~\ref{sec:merge}. Propagating lower bounds on the cut contribution inside each block to the full problem requires a deterministic lower bound on the fraction of the total between-block edge weight cut by the merge.

For a nonnegative MaxCut instance, fix the assignment \(x^{(r)}\) inside each block and let \(W_{\rm cross}\) denote the total weight of edges between blocks. The contribution to the cut from edges between blocks can then be written as
\begin{equation}
C_{\rm cross}(\boldsymbol{\sigma})
=
\sum_{r<t}\;
\sum_{\substack{u\in V_r\\v\in V_t\\(u,v)\in E}}
W_{uv}\,
\mathbf{1}\!\bigl[
\sigma_r\ne x_u^{(r)}x_v^{(t)}\,\sigma_t
\bigr].
\label{eq:cross-block-parity-cut}
\end{equation}
From Eq.~\eqref{eq:cross-block-parity-cut}, \(C_{\rm cross}\) is a nonnegative weighted sum of parity mismatch indicators in the block flip variables \(\sigma_r\), with weights summing to \(W_{\rm cross}\). For the signed merge objective, applying Eq.~\eqref{eq:signed-term-parity-indicator} to each term \(-K_{rt}\sigma_r\sigma_t\) with \(K_{rt}\ne0\) gives
\begin{equation}
-K_{rt}\,\sigma_r\sigma_t
=
-|K_{rt}|
+
2|K_{rt}|\,
\mathbf{1}\!\bigl[\sigma_r\ne\operatorname{sgn}(K_{rt})\,\sigma_t\bigr].
\label{eq:signed-to-parity-cut}
\end{equation}
Summing over all nonzero \(K_{rt}\) gives
\begin{equation}
\begin{aligned}
C_{\rm merge}^{+}(\boldsymbol{\sigma})
&=-H_{\rm merge}(\boldsymbol{\sigma})\\
&=C_0+
\sum_{\substack{r<t\\K_{rt}\ne0}}
w_{rt}\,\mathbf{1}[\sigma_r\ne\chi_{rt}\sigma_t],\\
C_0
&=-\kappa_{\rm merge}
-\sum_{\substack{r<t\\K_{rt}\ne0}}|K_{rt}|,\\
w_{rt}&=2|K_{rt}|,\qquad
\chi_{rt}=\operatorname{sgn}(K_{rt}).
\end{aligned}
\label{eq:parity-cut-representation}
\end{equation}
Therefore, maximizing \(C_{\rm merge}^{+}\) is equivalent to minimizing \(H_{\rm merge}\).

\begin{lemma}[Merge bound by conditional expectation]
\label{lem:ce}
For a merge objective written in the form~\eqref{eq:parity-cut-representation}, there exists an assignment \(\boldsymbol{\sigma}_{\rm ce}\), obtainable by the method of conditional expectation~\cite{alon2016probabilistic}, such that
\begin{equation}
C_{\rm merge}^{+}(\boldsymbol{\sigma}_{\rm ce})
\ge
C_0+\frac12
\sum_{\substack{r<t\\K_{rt}\ne0}}w_{rt}.
\label{eq:ce-guard}
\end{equation}
\end{lemma}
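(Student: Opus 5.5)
We prove the bound with the standard probabilistic argument for MaxCut, followed by the usual derandomization. First, draw each \(\sigma_r\) independently and uniformly from \(\{\pm1\}\). For every pair \(r<t\) with \(K_{rt}\ne0\), the event \(\sigma_r\ne\chi_{rt}\sigma_t\) has probability exactly \(1/2\): conditioned on \(\sigma_t\), the value \(\chi_{rt}\sigma_t\) is fixed, and \(\sigma_r\) equals it with probability \(1/2\). Linearity of expectation applied to Eq.~\eqref{eq:parity-cut-representation} then gives
\begin{equation}
\mathbb E\bigl[C_{\rm merge}^{+}(\boldsymbol{\sigma})\bigr]=C_0+\frac12\sum_{\substack{r<t\\K_{rt}\ne0}}w_{rt}.
\end{equation}
Since some assignment attains at least the expectation, an assignment satisfying Eq.~\eqref{eq:ce-guard} exists.

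Second, we make the choice deterministic by the method of conditional expectation. We fix \(\sigma_1,\ldots,\sigma_h\) one at a time. Suppose \(\sigma_1,\ldots,\sigma_{r-1}\) are already fixed. Let \(\Phi_r(\sigma_r)\) denote the conditional expectation of \(C_{\rm merge}^{+}\) given the fixed values and \(\sigma_r\), with the later spins still uniform and independent. We choose the value of \(\sigma_r\) that maximizes \(\Phi_r\). Because \(\Phi_{r-1}\) is the average of \(\Phi_r(+1)\) and \(\Phi_r(-1)\), the larger of the two is at least \(\Phi_{r-1}\). By induction, the conditional expectations never decrease. The final fully determined value \(C_{\rm merge}^{+}(\boldsymbol{\sigma}_{\rm ce})\) is therefore at least the unconditional expectation computed above.

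Third, we check that each conditional expectation is efficiently computable, so that "obtainable" holds. A pair with both endpoints fixed contributes its actual indicator. Any pair with at least one unfixed endpoint contributes exactly \(w_{rt}/2\), because the unfixed spin is uniform and independent of everything else. Consequently, comparing \(\Phi_r(+1)\) with \(\Phi_r(-1)\) only involves the pairs between \(r\) and earlier blocks. This is the familiar greedy rule: choose \(\sigma_r\) to maximize \(\sum_{t<r}w_{rt}\mathbf 1[\sigma_r\ne\chi_{rt}\sigma_t]\).

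There is no real obstacle here. The only point that needs care is the bookkeeping of signs, namely that the signed indicator \(\sigma_r\ne\chi_{rt}\sigma_t\) is still balanced under uniform \(\sigma_r\). This holds because \(\chi_{rt}\in\{\pm1\}\) only relabels which value of \(\sigma_r\) counts as a mismatch.
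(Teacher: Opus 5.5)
Your proof is correct and takes the same route as the paper. Uniform independent block flips give each signed mismatch indicator probability $1/2$, so linearity yields the expectation $C_0+\frac12\sum w_{rt}$, and the method of conditional expectation then removes the randomness; your explicit reduction of each step to the greedy rule over earlier blocks spells out what the paper states in a single sentence.
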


\begin{proof}
Because each indicator in Eq.~\eqref{eq:parity-cut-representation} has expectation \(1/2\) under independent uniform choices of the variables \(\sigma_r\), the expected value of \(C_{\rm merge}^{+}\) is \(C_0+\frac12\sum_{\substack{r<t\\K_{rt}\ne0}}w_{rt}\). The method of conditional expectation fixes the \(\sigma_r\) one at a time without decreasing the conditional expectation, yielding a deterministic assignment satisfying Eq.~\eqref{eq:ce-guard}.
\end{proof}

For a partitioned nonnegative MaxCut instance with positive total edge weight, let \(W_{\rm in}\) be the total weight of edges within blocks, and define
\begin{equation}
\varepsilon=\frac{W_{\rm cross}}{W_{\rm tot}},
\qquad
W_{\rm tot}=W_{\rm in}+W_{\rm cross}.
\end{equation}
Suppose that the cut contribution inside each block is at least \(\tau\) times its total internal edge weight and that the chosen block flips give a contribution of at least \(\zeta W_{\rm cross}\) from edges between blocks, where \(\zeta\in[0,1]\). For the resulting assignment \(\hat z\), summing the cut contributions inside the blocks and adding the contribution from edges between blocks gives \(C_G(\hat z)\ge\tau W_{\rm in}+\zeta W_{\rm cross}\). Substituting \(W_{\rm in}=(1-\varepsilon)W_{\rm tot}\) and \(W_{\rm cross}=\varepsilon W_{\rm tot}\) gives
\begin{equation}
C_G(\hat z)
\ge
\bigl[(1-\varepsilon)\tau+\varepsilon\zeta\bigr]W_{\rm tot}.
\label{eq:general-decomp}
\end{equation}

\begin{theorem}[Merge lower bound at a hierarchy level]
\label{thm:guard}
For a nonnegative MaxCut instance, suppose that the cut contribution inside each block is at least \(\tau\) times its total internal edge weight. If the block flips are chosen by the method of conditional expectation, then the global cut value satisfies
\begin{equation}
C_G(\hat z)
\ge
\left[
\tau+\left(\frac12-\tau\right)\varepsilon
\right]W_{\rm tot}.
\label{eq:guarded-bound}
\end{equation}
\end{theorem}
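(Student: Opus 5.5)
The plan is to reduce the claim to the general decomposition bound~\eqref{eq:general-decomp} by showing that the conditional-expectation flips give \(\zeta=\tfrac12\). Once \(C_{\rm cross}(\boldsymbol{\sigma}_{\rm ce})\ge\tfrac12W_{\rm cross}\) is established, we substitute into \eqref{eq:general-decomp}. This gives \(C_G(\hat z)\ge[(1-\varepsilon)\tau+\tfrac12\varepsilon]W_{\rm tot}=[\tau+(\tfrac12-\tau)\varepsilon]W_{\rm tot}\), which is Eq.~\eqref{eq:guarded-bound}.

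To get \(\zeta=\tfrac12\), we will not pass through the signed representation~\eqref{eq:parity-cut-representation}. Instead we apply the conditional-expectation argument of Lemma~\ref{lem:ce} directly to the cross-block expression~\eqref{eq:cross-block-parity-cut}. By Proposition~\ref{prop:merge}, the within-block contributions are independent of \(\boldsymbol{\sigma}\), so \(C_G(z)=C_{\rm in}+C_{\rm cross}(\boldsymbol{\sigma})\), where \(C_{\rm in}\) is fixed by the local assignments. Equation~\eqref{eq:cross-block-parity-cut} writes \(C_{\rm cross}\) as a sum of nonnegative weights \(W_{uv}\), totalling \(W_{\rm cross}\), each multiplying a parity mismatch indicator \(\mathbf{1}[\sigma_r\ne x_u^{(r)}x_v^{(t)}\sigma_t]\) with \(r\ne t\).

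Under independent uniform \(\sigma_r\), the product \(\sigma_r\sigma_t\) is uniform on \(\{\pm1\}\), so each indicator has expectation \(\tfrac12\). Hence \(\mathbb E[C_{\rm cross}]=\tfrac12W_{\rm cross}\). The method of conditional expectation fixes \(\sigma_1,\ldots,\sigma_h\) sequentially, each time choosing the value that does not decrease the conditional expectation. The result is a deterministic \(\boldsymbol{\sigma}_{\rm ce}\) with \(C_{\rm cross}(\boldsymbol{\sigma}_{\rm ce})\ge\tfrac12W_{\rm cross}\).

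The one point needing care, and the expected main obstacle, is that the procedure in the theorem is ``chosen by the method of conditional expectation'' on the merge objective, which Lemma~\ref{lem:ce} phrases through \(C^+_{\rm merge}\) with aggregated couplings \(K_{rt}\). Its guarantee \(C_0+\tfrac12\sum w_{rt}\) must therefore be compared with \(C_{\rm in}+\tfrac12W_{\rm cross}\). We handle this by noting that the expectation of a \(\boldsymbol{\sigma}\)-dependent objective under uniform independent flips is the same however its terms are grouped. Both forms equal \(-H_{\rm merge}\), up to the affine relation \(C_G=\tfrac12\sum W-\tfrac12 H_G\), so their conditional expectations are monotone transforms of each other at every stage. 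The sequential choices therefore coincide, ties aside.

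Concretely, \(\mathbb E[\sigma_r\sigma_t]=0\) gives \(\mathbb E[H_{\rm merge}]=\kappa_{\rm merge}\), whence \(\mathbb E[C_G]=\tfrac12\sum W-\tfrac12\kappa_{\rm merge}=C_{\rm in}+\tfrac12W_{\rm cross}\). Thus the Lemma~\ref{lem:ce} guarantee, translated into cut units, is exactly \(\zeta=\tfrac12\). The cancellation of the \(|K_{rt}|\) terms, which appear both in \(C_0\) and in \(\tfrac12\sum w_{rt}\), makes this translation clean. The nonnegativity of the weights ensures \(\zeta=\tfrac12\in[0,1]\), so \eqref{eq:general-decomp} applies.
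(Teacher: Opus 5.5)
Your proposal is correct and is essentially the paper's proof: apply the conditional-expectation argument directly to the cross-block parity form~\eqref{eq:cross-block-parity-cut} to get \(\zeta=\tfrac12\), then substitute into~\eqref{eq:general-decomp}. Your additional check that this agrees with running Lemma~\ref{lem:ce} on \(C^{+}_{\rm merge}\) is sound and makes explicit a point the paper leaves implicit: \(C_G\) is an increasing affine function of \(-H_{\rm merge}\), so the admissible sequential choices coincide at every stage, ties included.
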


\begin{proof}
Applying the method of conditional expectation used in Lemma~\ref{lem:ce} to Eq.~\eqref{eq:cross-block-parity-cut} produces block flips whose contribution from edges between blocks is at least \(W_{\rm cross}/2\). Setting \(\zeta=1/2\) in Eq.~\eqref{eq:general-decomp} gives Eq.~\eqref{eq:guarded-bound}.
\end{proof}

The blockwise hypothesis of Theorem~\ref{thm:guard} concerns the cut contribution inside each block, not its size, and does not require \(B\le Q\). In the original QAOA\(^{2}\) hierarchy, each leaf block is encoded directly as a QAOA problem on its vertices, which requires \(B\le Q\). SP-HQAOA instead handles each block with the PRG solver of Sec.~\ref{sec:local-solver}, whose QAOA calls use at most \(Q\) qubits even when \(B>Q\).

For a nonnegative MaxCut instance with objective \(H\) on \(|V|>B\) variables, the recursive construction proceeds as follows:
\begin{enumerate}[label=(\roman*)]
\item Choose the block partition so that \(|V_r|\le B\) for all \(r\).
\item Solve each block with the PRG solver of Sec.~\ref{sec:local-solver}, or with any solver that provides a valid \(\tau\) for the blockwise hypothesis of Theorem~\ref{thm:guard}.
\item Build the signed merge graph on the blocks using Eq.~\eqref{eq:merge-coupling}.
\item Solve the merge problem directly if it has at most \(Q\) block flip variables. Otherwise, rewrite its objective in the form of Eq.~\eqref{eq:parity-cut-representation} and apply the same procedure recursively, with the block solver at the next level providing a valid \(\tau\) for that objective.
\item Lift the chosen block flips to obtain \(\hat z\) and evaluate \(H(\hat z)\).
\end{enumerate}
If a merge solver guarantees that the contribution from edges between blocks is at least \(\zeta W_{\rm cross}\) for some \(\zeta>1/2\), Eq.~\eqref{eq:general-decomp} applies with this value of \(\zeta\).

\subsection{From local bounds to a global bound}
\label{sec:guarantees}

Consider a MaxCut block \(G_r\) induced by the vertex set \(V_r\), and let \(E_r\) denote its internal edge set. The cut bounds below use the block Ising energy \(H_r\) and the total internal edge weight \(W_r\) given by
\begin{equation}
H_r(z)=\sum_{(i,j)\in E_r}W_{ij}z_iz_j,
\qquad
W_r=\sum_{(i,j)\in E_r}W_{ij}.
\end{equation}
The cut value of \(G_r\) and its maximum then satisfy
\begin{equation}
C_{G_r}(z)=\tfrac12(W_r-H_r(z)),
\qquad
C_{G_r}^*=\tfrac12(W_r-H_r^*),
\end{equation}
where \(H_r^*=\min_z H_r(z)\). Once the sector choices are fixed, the exact offsets \(\kappa_\ell\) in Lemma~\ref{lem:iterated-lift} are independent of the spin assignment and cancel from energy differences. PRG is applied to this block for \(L_r\) contraction steps, with matching \(M_{r,\ell}\) and effective objective \(H_r^{(\ell)}\) at step \(\ell\). The selected parity of each edge \((i,j)\in M_{r,\ell}\) has repair risk \(R_{ij}^{(\ell)}(\eta_{ij})\). The resulting cumulative risk is
\begin{equation}
\Delta_r^{\rm RG}
=
\sum_{\ell=0}^{L_r-1}
\sum_{(i,j)\in M_{r,\ell}}
R_{ij}^{(\ell)}(\eta_{ij}).
\label{eq:cumulative-risk}
\end{equation}
Each \(R_{ij}^{(\ell)}\) is computed from the couplings of the effective objective \(H_r^{(\ell)}\).

Let \(\hat s_{L_r}\) be the solution obtained for the final effective problem \(H_r^{(L_r)}\), and let \(\Delta_r^{\rm solve}\) denote the bound \(\Delta^{\rm solve}\) of Sec.~\ref{sec:local-solver} for this block. Then \(\hat s_{L_r}\) satisfies
\begin{equation}
H_r^{(L_r)}(\hat s_{L_r})-\min_s H_r^{(L_r)}(s)
\le
\Delta_r^{\rm solve}.
\label{eq:terminal-gap}
\end{equation}

\begin{theorem}[Local risk bound for PRG]
\label{thm:local-risk}
Let \(\hat x_{\rm cert}^{(r)}\) be the solution obtained by lifting \(\hat s_{L_r}\) back to the variables of block \(G_r\). The energy difference from the block optimum satisfies
\begin{equation}
H_r(\hat x_{\rm cert}^{(r)})-H_r^*
\le
\Delta_r^{\rm RG}
+
\Delta_r^{\rm solve}.
\label{eq:local-energy-risk}
\end{equation}
Equivalently, the lifted cut value satisfies
\begin{equation}
C_{G_r}(\hat x_{\rm cert}^{(r)})
\ge
C_{G_r}^*
-
\frac{
	\Delta_r^{\rm RG}
	+
	\Delta_r^{\rm solve}
}{2}.
\label{eq:local-cut-risk}
\end{equation}
\end{theorem}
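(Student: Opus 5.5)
The plan is a telescoping argument over the $L_r$ contraction steps. Corollary~\ref{cor:matching-risk} controls the minimum at each step, and Lemma~\ref{lem:iterated-lift} transports energies between levels. Write $m_\ell=\min_s H_r^{(\ell)}(s)$ for $\ell=0,\ldots,L_r$, so that $m_0=H_r^*$. Let $\rho_\ell=\sum_{(i,j)\in M_{r,\ell}}R_{ij}^{(\ell)}(\eta_{ij})$, so that $\Delta_r^{\rm RG}=\sum_\ell\rho_\ell$.

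\emph{Step 1 (one contraction step).} I will show that $m_{\ell+1}+\kappa_\ell\le m_\ell+\rho_\ell$. By Eq.~\eqref{eq:level-recursion}, $H_r^{(\ell)}\circ\Lambda_\ell=H_r^{(\ell+1)}+\kappa_\ell$. Hence $\min_s H_r^{(\ell+1)}(s)+\kappa_\ell$ equals the minimum of $H_r^{(\ell)}$ over $\operatorname{im}(\Lambda_\ell)$. By the single-edge characterization of the parity sector, applied edge by edge since $M_{r,\ell}$ is a matching, this image is exactly the set of assignments with $s_is_j=\eta_{ij}$ for all $(i,j)\in M_{r,\ell}$. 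That constrained minimum is $m_M$ in Corollary~\ref{cor:matching-risk}, applied to the zero-field signed objective $H_r^{(\ell)}$. Proposition~\ref{prop:matched-closure} guarantees that $H_r^{(\ell)}$ belongs to this class, so the corollary applies with risks computed from its own couplings. The corollary gives $m_M\le m_\ell+\rho_\ell$.

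\emph{Step 2 (telescoping).} Summing Step 1 over $\ell=0,\ldots,L_r-1$ gives
\begin{equation*}
m_{L_r}+\sum_{\ell=0}^{L_r-1}\kappa_\ell\le H_r^*+\Delta_r^{\rm RG}.
\end{equation*}

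\emph{Step 3 (terminal solve and lift).} By Lemma~\ref{lem:iterated-lift},
\begin{equation*}
H_r(\hat x_{\rm cert}^{(r)})=H_r^{(L_r)}(\hat s_{L_r})+\sum_\ell\kappa_\ell.
\end{equation*}
Eq.~\eqref{eq:terminal-gap} then gives $H_r(\hat x_{\rm cert}^{(r)})\le m_{L_r}+\Delta_r^{\rm solve}+\sum_\ell\kappa_\ell$. Combining this with Step 2 yields Eq.~\eqref{eq:local-energy-risk}. The cut form, Eq.~\eqref{eq:local-cut-risk}, follows from $C_{G_r}=\tfrac12(W_r-H_r)$ and $C_{G_r}^*=\tfrac12(W_r-H_r^*)$. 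If $L_r=0$, the statement reduces directly to Eq.~\eqref{eq:terminal-gap}.

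The main obstacle is the bookkeeping in Step 1. The offsets $\kappa_\ell$ must enter consistently on both sides, and each step's risk must be bounded relative to the optimum of the \emph{current} effective objective $H_r^{(\ell)}$, not of $H_r$. This is why the argument compares successive minima $m_\ell$ rather than repairing a single fixed global minimizer of $H_r$. I would state explicitly that the image of $\Lambda_\ell$ is exactly the constrained feasible set, since this is what lets Corollary~\ref{cor:matching-risk} be reused at every level.
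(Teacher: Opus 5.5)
Your proof is correct and follows the paper's argument: Corollary~\ref{cor:matching-risk} applied at each step to the closed effective objective $H_r^{(\ell)}$, telescoped over the $L_r$ steps, then combined with Lemma~\ref{lem:iterated-lift} and Eq.~\eqref{eq:terminal-gap}. The only difference is bookkeeping: the paper defines $m_\ell$ as the minimum of $H_r$ over the sector family $\mathcal F_{r,\ell}$, so the offsets cancel implicitly, whereas you take minima of $H_r^{(\ell)}$ itself and carry the $\kappa_\ell$ explicitly, which spells out the step the paper summarizes as ``agrees up to a constant.''
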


\begin{proof}
Let \(\mathcal F_{r,\ell}\) denote the sector family in the variables of block \(G_r\) after \(\ell\) contraction steps, and define
\begin{equation}
m_\ell=\min_{z\in\mathcal F_{r,\ell}} H_r(z).
\end{equation}
By Proposition~\ref{prop:matched-closure}, \(H_r^{(\ell)}\) is a zero-field signed Ising objective at every step. By Lemma~\ref{lem:iterated-lift}, \(H_r^{(\ell)}\) agrees with \(H_r|_{\mathcal F_{r,\ell}}\) up to a constant that does not depend on the spin assignment. Corollary~\ref{cor:matching-risk} then gives
\begin{equation}
m_{\ell+1}
\le
m_\ell
+
\sum_{(i,j)\in M_{r,\ell}}
R_{ij}^{(\ell)}(\eta_{ij}).
\end{equation}
Summing these inequalities over \(\ell\), with \(m_0=H_r^*\), gives
\begin{equation}
m_{L_r}
\le
H_r^*
+
\Delta_r^{\rm RG}.
\label{eq:local-risk-telescope}
\end{equation}
Combining Eq.~\eqref{eq:local-risk-telescope} with Eq.~\eqref{eq:terminal-gap} gives Eq.~\eqref{eq:local-energy-risk}. Using \(C_{G_r}=\tfrac{1}{2}(W_r-H_r)\) then gives Eq.~\eqref{eq:local-cut-risk}.
\end{proof}

Let \(\hat z_{\rm cert}\) denote the global assignment obtained by combining the block solutions \(\hat x_{\rm cert}^{(r)}\) with the chosen block flips.

\begin{corollary}[Global cut lower bound]
\label{cor:global}
Consider a nonnegative MaxCut instance in which the cut lower bound in Theorem~\ref{thm:local-risk} holds for every block with \(W_r>0\), and in which the chosen block flips cut at least a fraction \(\zeta\) of the total edge weight between blocks. Then the global cut value satisfies
\begin{equation}
\begin{aligned}
	C_G(\hat z_{\rm cert})
	&\ge
	\sum_{r:W_r>0}
	\max\!\left\{
	C_{G_r}^{*}
	-\tfrac12\bigl(\Delta_r^{\rm RG}+\Delta_r^{\rm solve}\bigr),
	\,0\right\}\\
	&\quad+\zeta W_{\rm cross}.
\end{aligned}
\label{eq:global-risk-direct}
\end{equation}
If the block flips are chosen by the method of conditional expectation, Eq.~\eqref{eq:global-risk-direct} applies with \(\zeta=1/2\).
\end{corollary}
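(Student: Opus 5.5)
The argument splits the global cut of \(\hat z_{\rm cert}\) into a part from edges inside the blocks and a part from edges between blocks, and bounds each part separately. Because the block flip merge sets \(z_u=\sigma_r\hat x_{{\rm cert},u}^{(r)}\) for \(u\in V_r\), every edge inside a block keeps its cut status under the flip. Hence
\begin{equation}
C_G(\hat z_{\rm cert})=\sum_{r}C_{G_r}(\hat x_{\rm cert}^{(r)})+C_{\rm cross}(\boldsymbol{\sigma}),
\end{equation}
where \(C_{\rm cross}\) is the cross-block contribution of Eq.~\eqref{eq:cross-block-parity-cut}. This decomposition uses only the fact that, in a nonnegative MaxCut instance, every edge lies either inside one block or between two blocks, so \(W_{\rm tot}=\sum_r W_r+W_{\rm cross}\) and the cut splits additively.

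For the block terms, I would first discard every block with \(W_r=0\). Such a block has no internal edges, so its cut contribution is exactly \(0\). For each block with \(W_r>0\), Theorem~\ref{thm:local-risk} gives
\begin{equation}
C_{G_r}(\hat x_{\rm cert}^{(r)})\ge C_{G_r}^*-\tfrac12\bigl(\Delta_r^{\rm RG}+\Delta_r^{\rm solve}\bigr).
\end{equation}
The cut value is a nonnegative weighted sum of indicators, since \(W_{ij}\ge0\), so we also have \(C_{G_r}(\hat x_{\rm cert}^{(r)})\ge0\). Taking the larger of these two lower bounds gives the \(\max\{\cdot,0\}\) term in Eq.~\eqref{eq:global-risk-direct}. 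The cross term satisfies \(C_{\rm cross}\ge\zeta W_{\rm cross}\) directly from the hypothesis on the block flips. Adding the bounds for the two parts yields Eq.~\eqref{eq:global-risk-direct}.

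For the final sentence of the corollary, I would apply the method of conditional expectation to Eq.~\eqref{eq:cross-block-parity-cut}, as in Lemma~\ref{lem:ce} and the proof of Theorem~\ref{thm:guard}. Each indicator there has expectation \(1/2\) under independent uniform \(\sigma_r\), so the method produces block flips with \(C_{\rm cross}\ge W_{\rm cross}/2\). This is \(\zeta=1/2\).

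I expect the only delicate point to be justifying that the block flips leave each block's cut unchanged. This relies on the global flip invariance of each block objective from Proposition~\ref{prop:merge}: the bound of Theorem~\ref{thm:local-risk} is stated for \(\hat x_{\rm cert}^{(r)}\), but it transfers unchanged to \(\sigma_r\hat x_{\rm cert}^{(r)}\) because \(H_r\) is invariant under \(z\mapsto -z\).
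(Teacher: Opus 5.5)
Your proposal is correct and follows the paper's proof. You bound each block with \(W_r>0\) by the larger of the Theorem~\ref{thm:local-risk} bound and zero, add \(\zeta W_{\rm cross}\) for the cross-block edges, and obtain \(\zeta=1/2\) from the conditional-expectation argument of Theorem~\ref{thm:guard}; you also spell out the additive split of the cut and the flip invariance of each block's cut, which the paper uses implicitly.
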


\begin{proof}
For each block with \(W_r>0\), Theorem~\ref{thm:local-risk} gives
\begin{equation}
C_{G_r}(\hat x_{\rm cert}^{(r)})
\ge
C_{G_r}^{*}
-
\tfrac12\bigl(\Delta_r^{\rm RG}+\Delta_r^{\rm solve}\bigr),
\label{eq:blockwise-local-risk}
\end{equation}
while nonnegativity of the cut value gives \(C_{G_r}(\hat x_{\rm cert}^{(r)})\ge0\). Taking the larger of these bounds for each block and adding the contribution \(\zeta W_{\rm cross}\) from edges between blocks gives Eq.~\eqref{eq:global-risk-direct}.
\end{proof}

Equation~\eqref{eq:global-risk-direct} also covers \(W_{\rm in}=0\). In this case, the blockwise sum is empty and the bound reduces to \(C_G(\hat z_{\rm cert})\ge\zeta W_{\rm cross}\). For the normalized form that follows, assume \(W_{\rm in}>0\).

For each block with \(W_r>0\), define the effective cut fraction
\begin{equation}
\tau_{r,{\rm eff}}
=
\frac{C_{G_r}^{*}-\tfrac12\bigl(\Delta_r^{\rm RG}+\Delta_r^{\rm solve}\bigr)}{W_r}.
\label{eq:tau-eff-local}
\end{equation}
The corresponding blockwise truncated average is
\begin{equation}
\bar\tau_{\rm eff}^{\rm blk+}
=
\frac{1}{W_{\rm in}}
\sum_{r:W_r>0}
W_r\,\max\{\tau_{r,{\rm eff}},0\}.
\end{equation}
With this notation, substituting \(W_{\rm in}=(1-\varepsilon)W_{\rm tot}\) and \(W_{\rm cross}=\varepsilon W_{\rm tot}\) into the bound of Corollary~\ref{cor:global} gives the normalized form
\begin{equation}
C_G(\hat z_{\rm cert})
\ge
\bigl[(1-\varepsilon)\bar\tau_{\rm eff}^{\rm blk+}
+\varepsilon\zeta\bigr]W_{\rm tot},
\label{eq:global-risk}
\end{equation}
which has the same form as Eq.~\eqref{eq:general-decomp}.

If certified lower bounds \(\underline C_{G_r}\le C_{G_r}^{*}\) are available for every block with \(W_r>0\), substituting \(\underline C_{G_r}\) for \(C_{G_r}^{*}\) in Eq.~\eqref{eq:global-risk-direct} and Eq.~\eqref{eq:tau-eff-local} gives computable forms of Eq.~\eqref{eq:global-risk-direct} and Eq.~\eqref{eq:global-risk}.

\section{Numerical validation}
\label{sec:numerics}

We first confirm the contraction identities and the repair-risk bound in the numerical calculations, then measure how the risk budget limits the number of contracted edges, compare PRG with projection-based baselines on leaf blocks, and determine whether the hierarchy output improves when the leaf block size exceeds the qubit limit. Unless stated otherwise, all QAOA simulations use depth \(p=1\). The parameters \(\beta\in[0,\pi/2]\) and \(\gamma\in[0,\pi]\) each take nine equally spaced candidate values, with at most two coordinate descent sweeps, and Appendix~\ref{app:grid-sensitivity} gives the dependence of the QAOA reference on the number of candidate values. The \(ZZ\) statistics used for sector selection are obtained by simulating QAOA problems containing at most \(Q\) variables, with the effect of finite sampling examined in Appendix~\ref{app:finite-shot-sensitivity}. The leaf block quality and hierarchy comparisons use the matching rule specified in Appendix~\ref{app:sector-selection-configurations}, without a finite risk budget \(T_R\) or threshold \(T_q\). For instances with a known exact optimum, we report \(\mathcal A_\star=C/C^\star\), where \(C^\star\) is the exact optimum. For the remaining instances, we report \(\mathcal A_{\rm ref}=C/C_{\rm ref}\), where \(C_{\rm ref}\) is the best cut value found by classical multistart local search.

\subsection{Algebraic consistency checks}
\label{sec:invariant-checks}

The contraction identities and the repair-risk bound hold for arbitrary zero-field signed Ising objectives and are used to check the effective objectives, lifted assignments, and repair risks produced by the numerical calculations. The checks use signed Erd\H{o}s--R\'enyi instances small enough to enumerate all assignments of each contracted objective and to obtain \(m\) and \(m_{ij,\eta}\) exactly by exhaustive search. Figure~\ref{fig:consistency}(a) shows that Eq.~\eqref{eq:sector-exact} holds exactly after one matched contraction and that Eq.~\eqref{eq:iterated-lift} holds exactly after two matched contractions for every enumerated assignment. In Fig.~\ref{fig:consistency}(b), the realized sector loss \(m_{ij,\eta}-m\) satisfies Eq.~\eqref{eq:repair-bound} for every tested combination of an edge and a sector.

\begin{figure}[!tbp]
\centering
\includegraphics[width=\columnwidth]{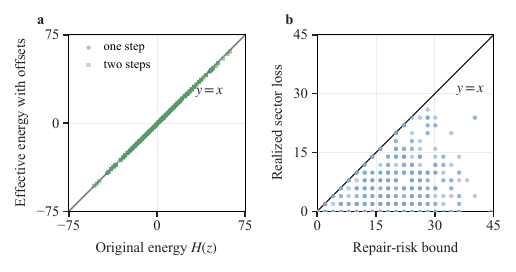}
\caption{Algebraic consistency checks. Panel (a) evaluates both sides of Eq.~\eqref{eq:sector-exact} for one matched contraction of two edges on \(20\) signed Erd\H{o}s--R\'enyi instances with \(N=7\) and edge probability \(0.5\), covering all \(32\) assignments of each contracted objective and \(640\) assignments in total. It also evaluates both sides of Eq.~\eqref{eq:iterated-lift} for two successive matched contractions of three and two edges on one signed Erd\H{o}s--R\'enyi instance with \(N=14\) and edge probability \(0.4\), covering all \(512\) assignments of the objective after the second contraction. The maximum absolute residual is \(0\) in both cases. Panel (b) compares the realized sector loss \(m_{ij,\eta}-m\) with the repair risk \(R_{ij}(\eta)\) for \(932\) combinations of an edge and a sector from \(30\) signed Erd\H{o}s--R\'enyi instances with \(N=8\) and edge probability \(0.55\). All points lie on or below the equality line.}
\label{fig:consistency}
\end{figure}

\subsection{Repair risk across graph families and risk budgets}
\label{sec:risk-families-budgets}

We quantify the tightness of the upper bound in Eq.~\eqref{eq:repair-bound} across different coupling distributions. For each tested combination of an edge and a sector, the realized sector loss is obtained by exhaustive search and compared with \(R_{ij}(\eta)\).

Four graph families supply these distributions, namely \(3\)-regular, weighted Erd\H{o}s--R\'enyi~\cite{gilbert1959random}, heavy-tailed, and signed Erd\H{o}s--R\'enyi graphs. The \(3\)-regular graphs have unit edge weights. The other three families are Erd\H{o}s--R\'enyi graphs with edge probability \(0.3\) and differ only in the weight distribution. The weighted family draws its weights uniformly from \(\{1,\ldots,5\}\). The signed family draws its weight magnitudes from the same uniform distribution and assigns independent equiprobable signs. In the heavy-tailed family, the weights are \(1+X\), where \(X\) follows a Pareto distribution with exponent \(1.5\). The vertex counts are \(N\in\{10,12,14,16\}\), with \(30\) random instances for each family and vertex count. Across all \(21{,}630\) tested combinations of an edge and a sector, the realized sector loss satisfies Eq.~\eqref{eq:repair-bound}. Among the combinations with \(R_{ij}(\eta)>0\), the upper bound is attained in every graph family, yet the median of \(\bigl(m_{ij,\eta}-m\bigr)/R_{ij}(\eta)\) does not exceed \(0.085\) in any family and vanishes in the \(3\)-regular family.

A finite risk budget \(T_R\) restricts the admissible selections, and varying it on the three families with nonnegative edge weights relates the number of contracted edges to the resulting cut quality. For each instance and each budget, we make an admissible selection, contract all selected edges simultaneously, and solve the contracted objective exactly, with budgets reported as \(T_R/(2\wtot)\). Figure~\ref{fig:risk-cert}(b) shows that more edges are contracted on average as \(T_R\) increases, yet the mean \(\mathcal{A}_\star\) is at least \(0.990\) for every family and every finite budget on these instances. The budget \(T_R=0\) admits only contractions with zero repair risk, and Appendix~\ref{app:safe-preprocess} applies this special case as a preprocessing step on larger instances.

\begin{figure}[!tbp]
\centering
\includegraphics[width=\columnwidth]{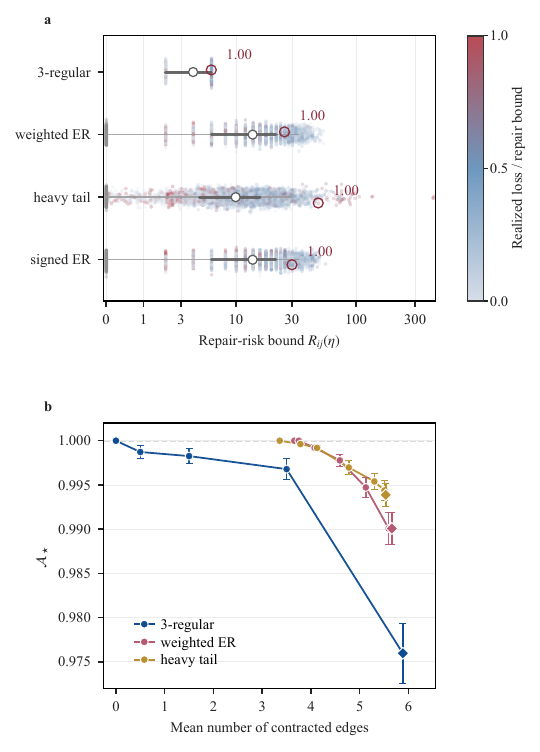}
\caption{Repair risk, solution quality, and contracted edge counts under risk budgets.
(a) Distribution of the repair risk \(R_{ij}(\eta)\) over tested combinations of an edge and a sector in four graph families, namely \(3\)-regular, weighted Erd\H{o}s--R\'enyi, heavy-tailed, and signed Erd\H{o}s--R\'enyi graphs. Horizontal positions use the transformation \(\log[1+R_{ij}(\eta)]\), with tick labels showing the corresponding original \(R_{ij}(\eta)\) values. For \(R_{ij}(\eta)>0\), point colors represent the ratio \(\bigl(m_{ij,\eta}-m\bigr)/R_{ij}(\eta)\). Gray points represent combinations satisfying \(R_{ij}(\eta)=m_{ij,\eta}-m=0\) and are excluded from the color mapping. White circles, thick horizontal bars, and thin horizontal bars indicate the median, interquartile range, and \(5\)th--\(95\)th percentile range, respectively. Red open circles and their annotations mark the largest value of this ratio in each graph family. (b) Relation between \(\mathcal{A}_\star\) and the mean number of contracted edges for the three families with nonnegative edge weights. The points at finite budgets correspond, in order, to \(T_R/(2W_{\rm tot})=0, 0.01, 0.02, 0.05, 0.10, 0.20\). Each curve connects the means for the same graph family across risk budgets. Diamonds denote contraction without an upper limit on the risk budget. Vertical error bars denote the standard error of the mean (s.e.m.) of \(\mathcal{A}_\star\).}
\label{fig:risk-cert}
\end{figure}

\subsection{Parity sector restriction and leaf block quality}
\label{sec:leaf-block-quality}

On leaf blocks, we compare PRG with two baselines built from the subspace projection used in MQAOA, using the same QAOA setting for all methods. Because this projection does not preserve zero-field closure, comparison with the less constrained baselines measures the cost in cut quality of the parity sector restriction.

We use \(20\) random \(N=40\) instances from each of the \(3\)-regular and \(4\)-regular graph families. The exact optimum \(C^\star\) is obtained by integer linear programming with zero optimality gap, following the procedure used in MQAOA~\cite{zou2025mqaoa}. The \({\rm MQ1}\) baseline performs one contraction using the MQAOA subspace projection and then solves the contracted objective. The \({\rm MQ}_{\rm cl}\) baseline applies the same projection-based contraction recursively and adds classical refinement through alternative matchings and local descent after lifting. The \({\rm PRG}_{\rm cl}\) method adds the same two refinements to PRG. The one-body fields generated by the projection are retained in both baselines. Appendix~\ref{app:sector-selection-configurations} compares alternative rules for sector selection on five graph families.

Figure~\ref{fig:leaf}(a) and Table~\ref{tab:leaf-results} compare QAOA, \({\rm MQ1}\), \({\rm MQ}_{\rm cl}\), PRG, and \({\rm PRG}_{\rm cl}\) on the same instances. The QAOA entry serves as an expectation reference, whereas the contraction methods are evaluated using the realized cut values of lifted assignments, with the corresponding QAOA sampling reference given in Appendix~\ref{app:sector-selection-configurations}. QAOA reaches about \(74\%\) of the exact optimum, consistent with the finite light cone at \(p=1\)~\cite{hastings2019local,farhi2020typical}. Without classical refinement, PRG attains \(96.5\%\)--\(97.1\%\) of the exact optimum and \({\rm MQ1}\) attains \(97.7\%\)--\(98.2\%\). With classical refinement, \({\rm PRG}_{\rm cl}\) reaches \(99.4\%\)--\(99.7\%\) and \({\rm MQ}_{\rm cl}\) reaches \(99.6\%\).

\begin{table}[!tbp]
\caption{Leaf block cut quality comparison on \(N=40\) graphs. Each entry is the mean \(\mathcal{A}_\star\) over \(20\) random instances. The five methods are QAOA, one contraction based on the MQAOA subspace projection followed by solution of the contracted objective (\({\rm MQ1}\)), the recursively and classically refined subspace-projection baseline (\({\rm MQ}_{\rm cl}\)), PRG, and PRG with the same classical refinements (\({\rm PRG}_{\rm cl}\)). The QAOA column reports the expected cut value of the optimized quantum state, \(\langle C\rangle/C^\star\). The four contraction methods report the realized cut values \(C(z)/C^\star\) of lifted assignments.}
\label{tab:leaf-results}
\begin{ruledtabular}
\begin{tabular}{c c c c c c}
	\(d\) & QAOA & \({\rm MQ1}\) & \({\rm MQ}_{\rm cl}\) & PRG & \({\rm PRG}_{\rm cl}\) \\
	\(3\) & \(0.744\) & \(0.977\) & \(0.996\) & \(0.971\) & \(0.997\) \\
	\(4\) & \(0.734\) & \(0.982\) & \(0.996\) & \(0.965\) & \(0.994\) \\
\end{tabular}
\end{ruledtabular}
\end{table}

Figure~\ref{fig:leaf}(b) reports the paired differences between PRG and \({\rm MQ1}\) and between \({\rm PRG}_{\rm cl}\) and \({\rm MQ}_{\rm cl}\). PRG has lower mean cut quality than \({\rm MQ1}\) in both graph families, with a larger gap for \(4\)-regular graphs. The differences between \({\rm PRG}_{\rm cl}\) and \({\rm MQ}_{\rm cl}\) are concentrated near zero, with means \(+0.001\) on \(3\)-regular graphs and \(-0.002\) on \(4\)-regular graphs. Each mean is smaller in magnitude than the standard error of the mean of the paired differences. Therefore, relative to the less constrained \({\rm MQ}_{\rm cl}\) baseline, the parity sector restriction in \({\rm PRG}_{\rm cl}\) has no measurable cost in mean cut quality on these instances.

\begin{figure}[t]
\centering
\includegraphics[width=\columnwidth]{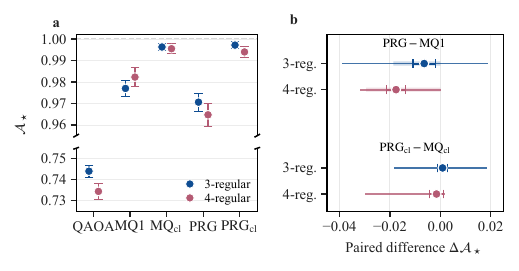}
\caption{Comparison of leaf block quality on \(N=40\) graphs. (a) Cut quality \(\mathcal{A}_\star=C/C^\star\) for five methods on \(3\)-regular and \(4\)-regular graphs. Each graph family contains \(n=20\) instances, and \(C^\star\) is the exact optimum. QAOA reports the expected cut value of the optimized quantum state, \(\langle C\rangle/C^\star\). The \({\rm MQ1}\), \({\rm MQ}_{\rm cl}\), PRG, and \({\rm PRG}_{\rm cl}\) methods report the realized cut values \(C(z)/C^\star\) of lifted assignments. Markers show the mean and error bars show the s.e.m. over instances. The dashed line marks the exact optimum, and the vertical axis is broken between the QAOA values and those of the four contraction methods. (b) Paired differences computed for each graph instance, \(\mathcal{A}_\star({\rm PRG})-\mathcal{A}_\star({\rm MQ1})\) and \(\mathcal{A}_\star({\rm PRG}_{\rm cl})-\mathcal{A}_\star({\rm MQ}_{\rm cl})\). Positive values favor PRG. Thin lines show the \(5\)th--\(95\)th percentile range and shaded bands show the interquartile range of these differences over instances, whereas markers show means with the s.e.m. of the paired differences. The dashed line marks zero difference.}
\label{fig:leaf}
\end{figure}

\subsection{Recursive hierarchy with \texorpdfstring{\(B>Q\)}{B>Q} leaf blocks}
\label{sec:large-scale-recursion}

The \(B=Q\) reference follows the divide-and-merge structure of QAOA\(^{2}\)~\cite{zhou2023qaoa2}, with each leaf block solved directly by one QAOA call. In SP-HQAOA, each merge selects the better of the recursive block flip solution and the conditional expectation assignment of Lemma~\ref{lem:ce}, whereas each merge in the reference uses only the recursive block flip solution.

We use ten weighted \(3\)-regular MaxCut instances for each of \(N=500\) and \(N=1000\), with \(Q=10\) and \(B\in\{10,20,40,80\}\). For each \(N\), the reference and SP-HQAOA use the same QAOA setting and form leaf blocks by spectral ordering based on the Fiedler vector~\cite{fiedler1973algebraic}. For each \(N\), all reported values are averaged over the same ten instances, and \(\bar\varepsilon\) denotes the mean of \(\varepsilon\) over this instance set. Figure~\ref{fig:bq-sweep} displays both \(\bar\varepsilon\) and \(\mathcal{A}_{\rm ref}\) as functions of \(B/Q\), with complete numerical values listed in Appendix~\ref{app:bq-sweep-table}.

At \(B=Q\), the reference and SP-HQAOA give comparable \(\mathcal{A}_{\rm ref}\) despite using different leaf block solvers and merge rules. Within SP-HQAOA, increasing \(B/Q\) at fixed \(Q=10\) yields clear gains. Figure~\ref{fig:bq-sweep}(a) shows that \(\bar\varepsilon\) decreases over the same range of \(B/Q\), corresponding to a larger fraction of the total edge weight inside the blocks. At \(B=80\), the hierarchy output before local search reaches \(\mathcal{A}_{\rm ref}=0.956\pm0.003\) for \(N=500\) and \(0.870\pm0.003\) for \(N=1000\), with every QAOA call using at most \(10\) qubits.

\begin{figure}[!tbp]
\centering
\includegraphics[width=\columnwidth]{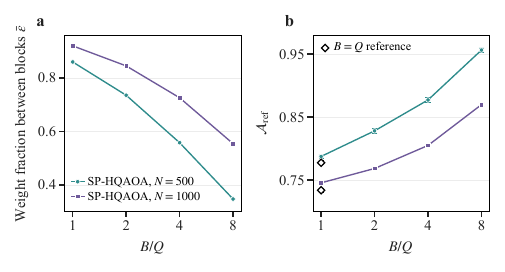}
\caption{Recursive hierarchy at fixed \(Q\) on weighted \(3\)-regular MaxCut with \(Q=10\).
(a) Mean fraction of total edge weight that lies between blocks, \(\bar\varepsilon\), as a function of \(B/Q\).
(b) \(\mathcal{A}_{\rm ref}\) before local search. In (b), diamonds mark the \(B=Q\) reference at \(B/Q=1\). Curves show SP-HQAOA for \(N=500\) and \(N=1000\). Error bars denote s.e.m. over ten instances and are smaller than the markers where not visible.}
\label{fig:bq-sweep}
\end{figure}

\section{Discussion}
\label{sec:discussion}

Through parity sector contraction, SP-HQAOA fixes selected relative parities and keeps every reduced objective in the zero-field signed Ising class, allowing the same contraction and merge rules to be applied at each recursive level. Within a single algorithm, every quantum call involves at most \(Q\) variables, and successive contractions extend the effective range of a depth-\(p\) circuit.

The separation of the maximum leaf block size \(B\) from the qubit limit \(Q\) is one of the main results of this construction. Successive contractions remove the restriction that a single quantum call places on the leaf block size, allowing leaf blocks to grow at fixed \(Q\) and to retain more edge weight within the blocks before block flip merging. Therefore, SP-HQAOA can reuse a small quantum register, and the recursive hierarchy addresses combinatorial optimization problems whose variable count far exceeds the qubit limit of a single quantum call.

Repair risk provides a quantitative constraint on successive contractions. The deterministic repair-risk bound remains valid across contraction steps. Future extensions could tighten the bound derived from a single spin flip through repairs involving multiple spins or local exact enumeration, and adaptive allocation of the risk budget could concentrate deeper contractions in blocks where they are expected to yield the largest gain. Together, these extensions could permit additional contraction steps while retaining a nontrivial cumulative lower bound.

The zero-field signed Ising class includes not only MaxCut at the top level but also the signed merge objectives that arise naturally at every recursive hierarchy level. Because any reduction that generates one-body field terms takes the intermediate effective objective outside this class, preserving the zero-field structure is what allows the same contraction rules and repair-risk bounds to be used at every level. Extending the contraction rules and repair-risk analysis to Ising Hamiltonians with one-body fields would further expand the scope of SP-HQAOA.

Sector selection requires only computational-basis samples, so on hardware it needs no measurement bases beyond those used for energy estimation. The primary direction for future work is to test parity sector contraction on quantum hardware. Zero-field closure and the repair-risk bounds remain valid whether \(c_{ij}\) is obtained from simulation or hardware measurement. The central empirical question is whether sector selection remains stable under noise. Increasing hardware scale would also relax the constraints on the hierarchy. A larger \(Q\) would reduce the number of contractions needed to compress a leaf block to a single call and make larger leaf blocks easier to accommodate within a fixed risk budget. Deeper circuits could cover larger regions of original variables in fewer contraction steps.

\begin{acknowledgments}
This work was supported by the Natural Science Foundation of Jilin Province under Grant No. 20240101013JC, and the National Natural Science Foundation of China under Grants No.12375020, No. 62201493, No. 62561052, and No. 12575010.
\end{acknowledgments}

\section*{Data and Code Availability}
The data and code that support the findings of this article are openly available~\cite{code}.

\appendix

\section{Bounds on retained probability from finite samples}
\label{app:finite-sample}

The matching rule in Sec.~\ref{sec:local-solver} relies on the sector probability \(q_{ij}(\eta)\), which must be estimated from finite samples. Reusing the same samples to choose and evaluate a matching introduces adaptive selection bias. To construct a confidence bound from finite samples without this bias, the QAOA parameters, matching, and parities are determined from one sample set, while the corresponding probabilities are estimated from an independent set of reserved samples.

For a fixed matching \(M\) whose parities \(\eta_{ij}\) are evaluated under a common QAOA output distribution on the current effective graph, define the joint sector probability as
\begin{equation}
q(M)
=
\Pr\!\left[
s_i s_j=\eta_{ij}
\ \forall\,
(i,j)\in M
\right].
\label{eq:joint-sector-probability}
\end{equation}
We call \(q(M)\) the joint retained probability of the selected parity constraints on \(M\).

\begin{proposition}[Lower bounds from reserved samples]
\label{prop:heldout}
Let \(\delta_{\rm global}\in(0,1)\) be the global failure probability, and let \(N_{\rm hold},N_{\rm cert}\ge1\). For each probability estimate, suppose that \(N_{\rm hold}\) independent reserved bitstrings are generated from the corresponding QAOA output distribution. Let \(N_{\rm cert}\) denote the total number of lower bounds to be reported for the joint probabilities \(q(M)\) and the edge probabilities \(q_{ij}(\eta_{ij})\). The corresponding QAOA output distributions, matchings, parity choices, and \(N_{\rm cert}\) are fixed before the reserved samples are used. For a matching \(M\) whose parity constraints are jointly evaluated on the same reserved samples, define
\begin{equation}
\widehat q(M)
=
\frac{1}{N_{\rm hold}}
\sum_{\nu=1}^{N_{\rm hold}}
\mathbf{1}\!\left[
s_i^{(\nu)}s_j^{(\nu)}
=
\eta_{ij}
\ \forall\,
(i,j)\in M
\right].
\label{eq:direct-heldout-estimator}
\end{equation}
With probability at least \(1-\delta_{\rm global}\), all \(N_{\rm cert}\) bounds hold simultaneously. In particular,
\begin{equation}
q(M)
\ge
\left[
\widehat q(M)
-
\sqrt{
	\frac{
		\log(N_{\rm cert}/\delta_{\rm global})
	}{
		2N_{\rm hold}
}}
\right]_+.
\label{eq:direct-heldout-bound}
\end{equation}
For each edge \((i,j)\), let \(q_{ij,{\rm lb}}\) be the lower bound obtained in the same way from the empirical frequency of \(s_i s_j=\eta_{ij}\) in the reserved samples. If the lower bound for every edge in \(M\) is counted in \(N_{\rm cert}\), then
\begin{equation}
q(M)
\ge
\left[
1-\sum_{(i,j)\in M}(1-q_{ij,{\rm lb}})
\right]_+.
\label{eq:edgewise-heldout-bound}
\end{equation}
\end{proposition}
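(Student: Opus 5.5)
The plan is to apply Hoeffding's inequality to each reported estimate separately, and then take a union bound over all \(N_{\rm cert}\) of them. Fix any one of the quantities to be bounded: a joint probability \(q(M)\) or an edge probability \(q_{ij}(\eta_{ij})\). By hypothesis, the QAOA output distribution, the matching, the parities and \(N_{\rm cert}\) are all fixed before the reserved samples are drawn. So the indicator in Eq.~\eqref{eq:direct-heldout-estimator}, or its single-edge analogue, is a fixed \(\{0,1\}\)-valued function of each reserved bitstring. Its values over the \(N_{\rm hold}\) reserved bitstrings are therefore i.i.d.\ Bernoulli variables whose mean is exactly the target probability. This is the step where the separation between selection samples and reserved samples matters. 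Without it the indicator would depend on the data, and Hoeffding's inequality would not apply directly. I expect this conditioning argument, which makes precise that the selection is independent of the reserved set, to be the only delicate point.

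Next, the one-sided Hoeffding bound gives \(\Pr[\widehat q(M)-q(M)\ge t]\le \exp(-2N_{\rm hold}t^2)\). Choose \(t=\sqrt{\log(N_{\rm cert}/\delta_{\rm global})/(2N_{\rm hold})}\), so that the failure probability of each individual bound is at most \(\delta_{\rm global}/N_{\rm cert}\). A union bound over the \(N_{\rm cert}\) reported bounds then shows that, with probability at least \(1-\delta_{\rm global}\), every event \(q\ge \widehat q - t\) holds simultaneously. Because \(q(M)\ge0\) always, the positive part can be added, which gives Eq.~\eqref{eq:direct-heldout-bound}. The same construction defines \(q_{ij,{\rm lb}}=[\widehat q_{ij}-t]_+\le q_{ij}(\eta_{ij})\) on this same high-probability event, since the edge bounds are among the \(N_{\rm cert}\) counted.

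Finally, on that event, Eq.~\eqref{eq:edgewise-heldout-bound} follows from the union bound in Eq.~\eqref{eq:union-bound-probability} applied to the common distribution. That bound gives \(q(M)\ge 1-\sum_{(i,j)\in M}(1-q_{ij}(\eta_{ij}))\). Since \(q_{ij}(\eta_{ij})\ge q_{ij,{\rm lb}}\), the right-hand side is at least \(1-\sum(1-q_{ij,{\rm lb}})\). Taking the positive part is again justified by \(q(M)\ge0\). This step is deterministic once all edge bounds hold, so it needs no further probability beyond the \(\delta_{\rm global}\) already allotted.
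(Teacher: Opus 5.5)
Your proposal is correct and follows the paper's proof essentially step for step. Both arguments use the same chain: indicators that are Bernoulli with mean equal to the target probability (justified by fixing all selections before the reserved samples), a one-sided Hoeffding bound at level \(\delta_{\rm global}/N_{\rm cert}\), a union (Bonferroni) bound over the \(N_{\rm cert}\) reported bounds, and substitution of the edge lower bounds into Eq.~\eqref{eq:union-bound-probability}, with the positive parts justified by \(q(M)\ge0\).
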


\begin{proof}
For a fixed matching \(M\), each reserved bitstring contributes one when all parity constraints in \(M\) are satisfied and zero otherwise. Hence \(\widehat q(M)\) is the mean of \(N_{\rm hold}\) independent Bernoulli variables with expectation \(q(M)\). The same argument applies to the empirical sector probability of each edge. A one-sided Hoeffding inequality~\cite{hoeffding1963probability} gives a failure probability of at most \(\delta_{\rm global}/N_{\rm cert}\) for each lower bound. A Bonferroni correction over the \(N_{\rm cert}\) bounds, equivalently a union bound over their failure events, then shows that all bounds hold simultaneously with probability at least \(1-\delta_{\rm global}\). Substituting the individual edge bounds into Eq.~\eqref{eq:union-bound-probability} gives the second lower bound for \(q(M)\). The positive part follows from \(q(M)\ge0\).
\end{proof}

No independence among the parity constraints on different edges is assumed. Direct estimation of \(\widehat q(M)\) requires all constraints in \(M\) to be evaluated on the same reserved samples. If separate circuits reproduce the exact edge marginals of one QAOA output distribution on the current effective graph, the individual edge bounds can still be combined through Eq.~\eqref{eq:union-bound-probability}.

These bounds on retained probability complement the deterministic repair-risk bounds of Theorem~\ref{thm:repair} and Corollary~\ref{cor:matching-risk}. The repair-risk bounds control the worst-case objective loss of a sector restriction, whereas the retained probability measures the fraction of the QAOA output distribution that satisfies all selected parity constraints on \(M\).

\section{Additional numerical results}
\label{app:extended}

This appendix collects complete numerical values and supporting results for the numerical tests of Sec.~\ref{sec:numerics}.

\subsection{Comparison of sector selection rules and reference configurations}
\label{app:sector-selection-configurations}

To distinguish the respective roles of parity correlations and the repair risk constraint, we compare eight configurations comprising four sector selection rules and four reference configurations. The comparison covers five graph families, namely \(3\)-regular, \(4\)-regular, weighted Erd\H{o}s--R\'enyi, heavy-tailed, and stochastic block model graphs~\cite{holland1983stochastic}. The \(3\)-regular, weighted Erd\H{o}s--R\'enyi, and heavy-tailed families use the ensembles defined in Sec.~\ref{sec:risk-families-budgets}, while the \(4\)-regular graphs have unit edge weights. The stochastic block model divides the vertices into two equal groups. Edges within the same group and between different groups occur with probabilities \(0.5\) and \(0.1\), respectively, and all edge weights are one. The vertex counts are \(N\in\{10,12,14,16,18,20\}\), with \(30\) random instances for each family and vertex count. The exact optimum for each original instance is obtained by exhaustive search, and the effective problems after contraction are solved exactly to isolate the effect of sector selection from errors in solving the contracted problems.

For each candidate edge \((i,j)\) and sector \(\eta\), the main numerical calculations use the composite score
\begin{equation}
\begin{aligned}
{\rm score}_{ij}(\eta)
&=
|J_{ij}|
+0.25\,\lambda_{ij}(\eta)
-R_{ij}(\eta)\\
&\quad
-0.05\,D_{ij}(\eta)
+2\,\mathbf{1}\bigl[R_{ij}(\eta)\le 10^{-12}\bigr],
\end{aligned}
\label{eq:composite-score}
\end{equation}
where \(\lambda_{ij}(\eta)=\ln\!\left[(q_{ij}(\eta)+10^{-12})/(1-q_{ij}(\eta)+10^{-12})\right]\) is the regularized log odds associated with the sector probability and \(D_{ij}(\eta)=\bigl(|J_{ij}|+S_{ij}^{\min}\bigr)\bigl(1-\eta c_{ij}\bigr)\) grows with the probability of the opposite sector and with the local coupling scale. The same coefficients are used throughout the reported calculations. In the comparison with finite \(T_R\), \(T_R=0.2\,\wtot\), where \(\wtot\) is the total edge weight of the instance. No finite value is imposed on \(T_q\).

In the numerical tests, the \(ZZ\) correlations are obtained from depth-\(p=1\) statevector QAOA on local subproblems. Each subproblem contains at most \(Q\) variables and is formed from couplings incident to either endpoint of the candidate edge. When the local neighborhood is larger than \(Q\), the subproblem retains the candidate edge and the strongest adjacent couplings. At each effective graph, the same depth-\(p=1\) parameter pair is used for all such correlations.

When all selected parities are evaluated from common samples of one QAOA state on the current effective graph, both the edgewise union bound and the direct estimator of the joint sector probability described in Appendix~\ref{app:finite-sample} retain their probability interpretation. When separately sampled circuits on untruncated light cones reproduce the exact marginals of the same global QAOA state, only the edgewise construction applies. Each reported edgewise confidence bound is included in \(N_{\rm cert}\). When a light cone is truncated so that the local problem contains at most \(Q\) variables, the resulting \(q_{ij}\) are marginals of different local problems and are used only heuristically to choose the matching and parity sectors.

In the hierarchy benchmarks, the light cone is truncated only when the neighborhood of a candidate edge contains more than \(Q\) variables. The matching is built by taking admissible sectors in decreasing composite score, subject to Eq.~\eqref{eq:budgets}. In the \(N=40\) comparison of leaf block quality, effective graphs with at most \(20\) variables are evaluated as full statevector problems. The \({\rm MQ}_{\rm cl}\) and \({\rm PRG}_{\rm cl}\) methods use four matching trials, consisting of one default trial and three trials generated by perturbing the matching scores with amplitude \(0.05\). At each lifting step, the spin assignment at the preceding level is recovered by selecting, for each paired block, the computational-basis component with the largest probability in the corresponding isometry column. The single-spin local descent applied after lifting performs up to \(100\) passes.

When the selected matching is empty and the effective problem still exceeds \(Q\) variables, a greedy local search with \(20\) reproducible random restarts is used. No finite certified value of \(\Delta^{\rm solve}\) is assigned to the resulting solution. In the hierarchy benchmarks, effective problems larger than the threshold for exact enumeration specified in Appendix~\ref{app:bq-sweep-table} are treated by the same depth-\(p=1\) QAOA setting.

The comparison uses eight configurations.
\begin{enumerate}[leftmargin=1.4em]
\item The random parity reference selects each parity \(\eta\) with equal probability.
\item The QAOA sampling reference selects the best bitstring from the QAOA samples on the original graph, without contraction.
\item The single-spin elimination reference removes spins one at a time using the energy sector \(\eta_E\). The elimination generates one-body field terms and takes the resulting objective outside the zero-field class.
\item The edge energy rule selects the sector \(\eta_E\) for each matched edge.
\item The \(ZZ\) parity rule chooses \(\eta=+1\) when \(c_{ij}\ge0\) and \(\eta=-1\) when \(c_{ij}<0\).
\item The \(ZZ\) parity rule subject to the repair risk constraint uses the same parity choice and tie convention, with the matching determined by the composite score under the risk budget \(T_R\).
\item The composite parity rule subject to the repair risk constraint includes both parity sectors of each edge in the same composite score and determines the matching under the same risk budget.
\item The exact parity reference assigns all matched edges the parities realized in a single global minimizer.
\end{enumerate}
Figure~\ref{fig:sector-selection-comparison} shows that the \(ZZ\) parity rule without the repair risk constraint exhibits a low-quality tail on weighted Erd\H{o}s--R\'enyi and heavy-tailed graphs. The repair risk constraint substantially suppresses this tail. Both rules subject to the constraint remain close to the exact parity reference in cut quality, while their total repair risk is far lower than that of the reference.

\begin{figure*}[!t]
\centering
\includegraphics[width=\textwidth]{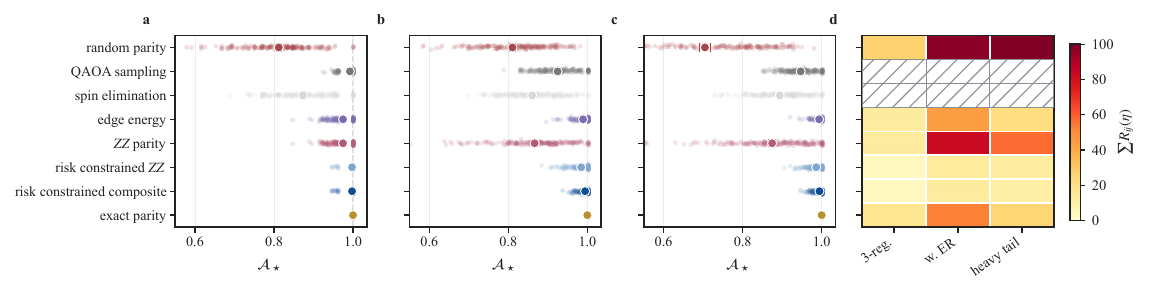}
\caption{Parity sector selection and the repair risk constraint.
(a)--(c) Performance of eight configurations on \(3\)-regular, weighted Erd\H{o}s--R\'enyi, and heavy-tailed graph families, respectively. The vertical dashed line marks \(\mathcal{A}_\star=1\). Small points show individual instances, and large markers show means with s.e.m.
(d) Mean total repair risk for each configuration across the same three graph families. Hatched cells indicate configurations for which the sum of repair risks from parity sector contraction is not applicable.}
\label{fig:sector-selection-comparison}
\end{figure*}

\subsection{Hierarchy results at fixed \texorpdfstring{\(Q\)}{Q}}
\label{app:bq-sweep-table}

Table~\ref{tab:B-sweep} gives the complete numerical values for the hierarchy benchmark at fixed \(Q\) in Sec.~\ref{sec:large-scale-recursion}. All instances are weighted \(3\)-regular MaxCut graphs with the qubit limit fixed at \(Q=10\). Spectral ordering is used to form leaf blocks of at most \(B\) vertices, and the \(B=Q\) reference and all SP-HQAOA results use the same partitioning rule.

The reference value \(C_{\rm ref}\) is the best cut found by single-spin local search from one greedy initial assignment and ten random initial assignments. The table reports hierarchy outputs before local search, and effective problems with at most six variables are solved by exact enumeration. The reported uncertainties are standard errors of the mean.

\begin{table}[!htbp]
\caption{Complete hierarchy results at fixed \(Q\). The \(B=Q\) reference solves each leaf block directly by QAOA. The reference value \(C_{\rm ref}\) is obtained by classical multistart local search, and \(\mathcal{A}_{\rm ref}=C/C_{\rm ref}\). For each \(N\), all rows use the same spectral partitioning rule and are averaged over the same ten instances.}
\label{tab:B-sweep}
\setlength{\tabcolsep}{3.2pt}
\scriptsize
\begin{ruledtabular}
\begin{tabular}{c c c c c c}
	\(N\) & \(B\) & method & blocks & \(\bar{\varepsilon}\) & \(\mathcal{A}_{\rm ref}\) \\
	\(500\) & \(10\) & \(B=Q\) reference & \(50\) & \(0.860\) & \(0.778\pm0.003\) \\
	\(500\) & \(10\) & SP-HQAOA & \(50\) & \(0.860\) & \(0.787\pm0.002\) \\
	\(500\) & \(20\) & SP-HQAOA & \(25\) & \(0.735\) & \(0.828\pm0.004\) \\
	\(500\) & \(40\) & SP-HQAOA & \(13\) & \(0.558\) & \(0.877\pm0.004\) \\
	\(500\) & \(80\) & SP-HQAOA & \(7\) & \(0.347\) & \(0.956\pm0.003\) \\
	\(1000\) & \(10\) & \(B=Q\) reference & \(100\) & \(0.920\) & \(0.734\pm0.003\) \\
	\(1000\) & \(10\) & SP-HQAOA & \(100\) & \(0.920\) & \(0.746\pm0.003\) \\
	\(1000\) & \(20\) & SP-HQAOA & \(50\) & \(0.845\) & \(0.768\pm0.003\) \\
	\(1000\) & \(40\) & SP-HQAOA & \(25\) & \(0.726\) & \(0.805\pm0.003\) \\
	\(1000\) & \(80\) & SP-HQAOA & \(13\) & \(0.554\) & \(0.870\pm0.003\) \\
\end{tabular}
\end{ruledtabular}
\end{table}

\subsection{Coverage of the effective range}
\label{app:effective-range-coverage}

The region \(\mathcal R_p^{(\ell)}(A_1,A_2)\) defined in Sec.~\ref{sec:effective-range} is obtained by lifting the radius-\(p\) neighborhood of two effective vertices to the original variables. To summarize the same lifting mechanism without specifying a distribution over edges, we instead average the fraction of original variables covered by the lifted radius-\(p\) neighborhood of a single effective vertex. The mean lifted coverage is
\begin{equation}
\rho_p^{(\ell)}
=
\mathbb E_{A_1\sim\operatorname{Unif}(V^{(\ell)})}
\left[
\frac{
\left|
\bigcup_{A_2\in N_{G^{(\ell)}}^p(A_1)}
V_{A_2}^{(0)}
\right|
}{N}
\right].
\label{eq:range-coverage}
\end{equation}
For each graph, contraction step, and circuit depth \(p\), the numerical estimate samples up to \(64\) effective vertices uniformly without replacement. All effective vertices are used when \(\lvert V^{(\ell)}\rvert\le64\). On \(N=500\) weighted \(3\)-regular graphs, contractions using a maximal matching rapidly increase the mean lifted coverage. For \(p=1\), \(\rho_1^{(\ell)}\) grows from \(0.008\) at \(\ell=0\) to \(0.803\) at \(\ell=8\). For \(p=2\), \(\rho_2^{(\ell)}\) grows from \(0.020\) at \(\ell=0\) to \(0.984\) at \(\ell=5\).

The effective graph does not become complete in this process. Instead, each effective vertex represents a growing cluster of original variables. Therefore, a radius-\(p\) neighborhood on the effective graph covers more original variables after lifting and extends the effective range described in Sec.~\ref{sec:effective-range}.

\begin{figure}[!tbp]
\centering
\includegraphics[width=\columnwidth]{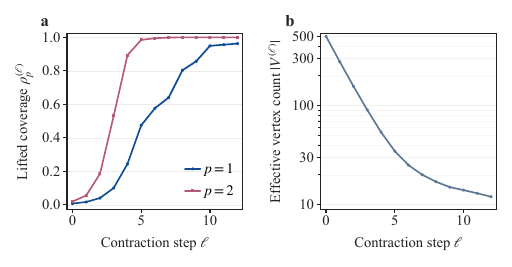}
\caption{Lifted coverage under repeated contraction of maximal matchings. Curves and shaded bands denote the mean and s.e.m. over three graph instances, respectively.
(a) Lifted coverage \(\rho_p^{(\ell)}\) for \(p=1\) and \(p=2\) on three \(N=500\) weighted \(3\)-regular graph instances. The vertical axis reports the fraction of original variables covered after a radius-\(p\) neighborhood on the effective graph is lifted back to the original variables, and faint points correspond to individual graph instances.
(b) Effective vertex count \(|V^{(\ell)}|\) across contraction steps.}
\label{fig:effective-range-validation}
\end{figure}

\subsection{Contractions with zero repair risk}
\label{app:safe-preprocess}

Corollary~\ref{cor:strong-edge} identifies edge sectors with zero repair risk. We use these sectors in a preprocessing step that contracts strong edges before the ordinary PRG step. We measure the fraction of vertices removed and compare the cut values obtained by the same solver with and without this preprocessing step.

The test uses graph sizes \(N\in\{60,100,200\}\) and five graph families, namely \(3\)-regular, \(4\)-regular, weighted Erd\H{o}s--R\'enyi, heavy-tailed, and stochastic block model graphs, with \(20\) random instances for each family and size. The cut changes are evaluated at \(Q\in\{6,8,10\}\). The vertex reduction fraction is defined as \(1-N_{\rm after}/N\) and is independent of \(Q\). We compare two variants. The first applies the contraction rule only once before solving, whereas the second reapplies it at each recursive level after contractions create new effective edges.

Figure~\ref{fig:safe-preprocess}(a) shows that the vertex reduction fraction is largest on heavy-tailed graphs, intermediate on weighted Erd\H{o}s--R\'enyi and stochastic block model graphs, and essentially zero on random regular graphs. Figure~\ref{fig:safe-preprocess}(b) compares the normalized cut change \(\Delta C/\wtot\) relative to the same procedure without this preprocessing step. Zero repair risk preserves the exact optimum value, but the realized cut value returned by an approximate solver may still increase or decrease after contraction.

\begin{figure}[!tbp]
\centering
\includegraphics[width=\columnwidth]{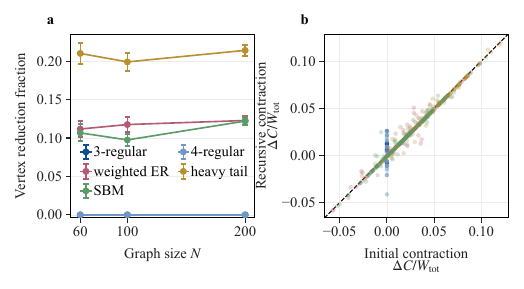}
\caption{Contractions with zero repair risk. We compare vertex reduction and normalized cut changes produced by contractions of strong edges with zero repair risk across five graph families, namely \(3\)-regular, \(4\)-regular, weighted Erd\H{o}s--R\'enyi, heavy-tailed, and stochastic block model (SBM) graphs. Panel (a) shows the vertex reduction fraction \(1-N_{\rm after}/N\) across graph sizes. Error bars show s.e.m. over \(20\) independent graph instances. Panel (b) compares the normalized cut change \(\Delta C/\wtot\) when the contraction rule is applied only once before solving or reapplied recursively. The dashed diagonal marks equal changes in the two cases. Each point represents one graph instance at one value of \(Q\).}
\label{fig:safe-preprocess}
\end{figure}

\subsection{Dependence of the QAOA reference on the number of parameter values}
\label{app:grid-sensitivity}

The main QAOA calculations optimize the QAOA parameters \(\beta\) and \(\gamma\) over nine candidate values each. To examine how the QAOA reference depends on this choice, Fig.~\ref{fig:grid-sensitivity} repeats the calculation with \(9\), \(13\), and \(17\) candidate values for each parameter on eight \(N=40\) instances from each random regular family.

On both graph families, increasing the number of candidate values per parameter from \(9\) to \(13\) or \(17\) does not close the gap between the QAOA reference and the contraction configurations in Sec.~\ref{sec:leaf-block-quality}.

\begin{figure}[!tbp]
\centering
\includegraphics[width=0.85\columnwidth]{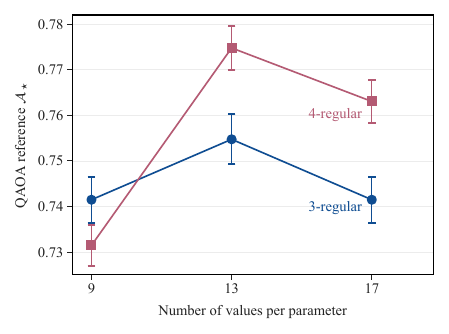}
\caption{QAOA reference for different numbers of parameter values. For QAOA at depth \(p=1\) on \(N=40\) \(3\)-regular and \(4\)-regular leaf blocks, we use \(9\), \(13\), and \(17\) candidate values for each of the parameters \(\beta\) and \(\gamma\). The vertical axis shows \(\mathcal{A}_\star=\langle C\rangle/C^\star\), where \(C^\star\) is the exact optimum. Markers and error bars show the mean and s.e.m. over eight instances from each graph family.}
\label{fig:grid-sensitivity}
\end{figure}

\subsection{Sensitivity of sector selection to finite sampling}
\label{app:finite-shot-sensitivity}

We perform sector selection on \(N=16\) instances from the \(3\)-regular, weighted Erd\H{o}s--R\'enyi, and heavy-tailed families using \(ZZ\) correlations estimated from \(128\), \(512\), and \(2048\) computational-basis samples, and compare the resulting cut values with those obtained from exact \(ZZ\) correlations. Each family contains \(20\) random instances.

Figure~\ref{fig:finite-shot} reports the cut quality and its difference from the result obtained with exact \(ZZ\) correlations. Averaged over all \(60\) instances, the mean \(\mathcal{A}_\star\) is \(0.988\), \(0.990\), and \(0.992\) for \(128\), \(512\), and \(2048\) samples, respectively, compared with \(0.990\) for exact \(ZZ\) correlations. Even with \(128\) samples, the mean cut quality differs from the exact-correlation result by only \(0.002\).

\begin{figure}[!tbp]
\centering
\includegraphics[width=\columnwidth]{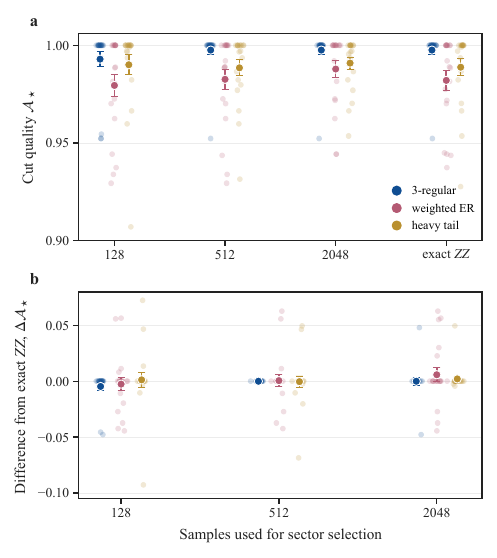}
\caption{Sensitivity of sector selection to finite sampling.
(a) Cut quality \(\mathcal{A}_\star\) obtained using \(128\), \(512\), and \(2048\) samples for sector selection and using exact \(ZZ\) correlations.
(b) Difference from the result obtained with exact \(ZZ\) correlations, defined as \(\Delta\mathcal{A}_\star=\mathcal{A}_\star(\mathrm{sampled}\,ZZ)-\mathcal{A}_\star(\mathrm{exact}\,ZZ)\). Small points show individual graph instances, whereas large markers and error bars show the mean and s.e.m. over \(20\) instances in each family.}
\label{fig:finite-shot}
\end{figure}

\makeatletter
\def\bibsection{%
\par
\begingroup
\baselineskip26\p@
\bib@device{\columnwidth}{72\p@}%
\endgroup
\nobreak\@nobreaktrue
\addvspace{19\p@}%
}
\makeatother

\bibliography{}

\begin{thebibliography}{40}

\bibitem{barahona1982complexity}
F. Barahona,
``On the computational complexity of Ising spin glass models,''
\emph{J. Phys. A: Math. Gen.} \textbf{15}, 3241--3253 (1982).

\bibitem{lucas2014ising}
A. Lucas,
``Ising formulations of many NP problems,''
\emph{Front. Phys.} \textbf{2}, 5 (2014).

\bibitem{mohseni2022ising}
N. Mohseni, P. L. McMahon, and T. Byrnes,
``Ising machines as hardware solvers of combinatorial optimization problems,''
\emph{Nat. Rev. Phys.} \textbf{4}, 363--379 (2022).

\bibitem{preskill2018nisq}
J. Preskill,
``Quantum computing in the NISQ era and beyond,''
\emph{Quantum} \textbf{2}, 79 (2018).

\bibitem{bharti2022nisq}
K. Bharti, A. Cervera-Lierta, T. H. Kyaw, \emph{et al.},
``Noisy intermediate-scale quantum algorithms,''
\emph{Rev. Mod. Phys.} \textbf{94}, 015004 (2022).

\bibitem{farhi2014qaoa}
E. Farhi, J. Goldstone, and S. Gutmann,
``A quantum approximate optimization algorithm,''
arXiv:1411.4028 (2014).

\bibitem{zhou2020qaoa}
L. Zhou, S.-T. Wang, S. Choi, H. Pichler, and M. D. Lukin,
``Quantum approximate optimization algorithm: Performance, mechanism, and implementation on near-term devices,''
\emph{Phys. Rev. X} \textbf{10}, 021067 (2020).

\bibitem{blekos2024review}
K. Blekos, D. Brand, A. Ceschini, C.-H. Chou, R.-H. Li, K. Pandya, and A. Summer,
``A review on quantum approximate optimization algorithm and its variants,''
\emph{Phys. Rep.} \textbf{1068}, 1--66 (2024).

\bibitem{goemans1995improved}
M. X. Goemans and D. P. Williamson,
``Improved approximation algorithms for maximum cut and satisfiability problems using semidefinite programming,''
\emph{J. ACM} \textbf{42}, 1115--1145 (1995).

\bibitem{wang2018qaoa}
Z. Wang, S. Hadfield, Z. Jiang, and E. G. Rieffel,
``Quantum approximate optimization algorithm for MaxCut: A fermionic view,''
\emph{Phys. Rev. A} \textbf{97}, 022304 (2018).

\bibitem{wurtz2021maxcut}
J. Wurtz and P. Love,
``MaxCut quantum approximate optimization algorithm performance guarantees for $p>1$,''
\emph{Phys. Rev. A} \textbf{103}, 042612 (2021).

\bibitem{farhi2022sk}
E. Farhi, J. Goldstone, S. Gutmann, and L. Zhou,
``The quantum approximate optimization algorithm and the Sherrington-Kirkpatrick model at infinite size,''
\emph{Quantum} \textbf{6}, 759 (2022).

\bibitem{boros2002pseudo}
E. Boros and P. L. Hammer,
``Pseudo-Boolean optimization,''
\emph{Discrete Appl. Math.} \textbf{123}, 155--225 (2002).

\bibitem{ebadi2022mis}
S. Ebadi, A. Keesling, M. Cain, \emph{et al.},
``Quantum optimization of maximum independent set using Rydberg atom arrays,''
\emph{Science} \textbf{376}, 1209--1215 (2022).

\bibitem{kim2022rydberg}
M. Kim, K. Kim, J. Hwang, E.-G. Moon, and J. Ahn,
``Rydberg quantum wires for maximum independent set problems,''
\emph{Nat. Phys.} \textbf{18}, 755--759 (2022).

\bibitem{bravyi2022coloring}
S. Bravyi, A. Kliesch, R. Koenig, and E. Tang,
``Hybrid quantum-classical algorithms for approximate graph coloring,''
\emph{Quantum} \textbf{6}, 678 (2022).

\bibitem{brandao2022sdp}
F. G. S. L. Brand\~ao, R. Kueng, and D. Stilck Fran\c{c}a,
``Faster quantum and classical SDP approximations for quadratic binary optimization,''
\emph{Quantum} \textbf{6}, 625 (2022).

\bibitem{boulebnane2024sat}
S. Boulebnane and A. Montanaro,
``Solving Boolean satisfiability problems with the quantum approximate optimization algorithm,''
\emph{PRX Quantum} \textbf{5}, 030348 (2024).

\bibitem{shaydulin2024scaling}
R. Shaydulin, C. Li, S. Chakrabarti, \emph{et al.},
``Evidence of scaling advantage for the quantum approximate optimization algorithm on a classically intractable problem,''
\emph{Sci. Adv.} \textbf{10}, eadm6761 (2024).

\bibitem{harrigan2021qaoa}
M. P. Harrigan, K. J. Sung, M. Neeley, \emph{et al.},
``Quantum approximate optimization of non-planar graph problems on a planar superconducting processor,''
\emph{Nat. Phys.} \textbf{17}, 332--336 (2021).

\bibitem{pagano2020qaoa}
G. Pagano, A. Bapat, P. Becker, \emph{et al.},
``Quantum approximate optimization of the long-range Ising model with a trapped-ion quantum simulator,''
\emph{Proc. Natl. Acad. Sci. U.S.A.} \textbf{117}, 25396--25401 (2020).

\bibitem{guerreschi2019qaoa}
G. G. Guerreschi and A. Y. Matsuura,
``QAOA for Max-Cut requires hundreds of qubits for quantum speed-up,''
\emph{Sci. Rep.} \textbf{9}, 6903 (2019).

\bibitem{farhi2020typical}
E. Farhi, D. Gamarnik, and S. Gutmann,
``The quantum approximate optimization algorithm needs to see the whole graph: A typical case,''
arXiv:2004.09002 (2020).

\bibitem{farhi2020worstcase}
E. Farhi, D. Gamarnik, and S. Gutmann,
``The quantum approximate optimization algorithm needs to see the whole graph: Worst case examples,''
arXiv:2005.08747 (2020).

\bibitem{lieb1972finite}
E. H. Lieb and D. W. Robinson,
``The finite group velocity of quantum spin systems,''
\emph{Commun. Math. Phys.} \textbf{28}, 251--257 (1972).

\bibitem{bravyi2006lieb}
S. Bravyi, M. B. Hastings, and F. Verstraete,
``Lieb-Robinson bounds and the generation of correlations and topological quantum order,''
\emph{Phys. Rev. Lett.} \textbf{97}, 050401 (2006).

\bibitem{hastings2019local}
M. B. Hastings,
``Classical and quantum bounded depth approximation algorithms,''
arXiv:1905.07047 (2019).

\bibitem{akshay2020reachability}
V. Akshay, H. Philathong, M. E. S. Morales, and J. D. Biamonte,
``Reachability deficits in quantum approximate optimization,''
\emph{Phys. Rev. Lett.} \textbf{124}, 090504 (2020).

\bibitem{zhou2023qaoa2}
Z. Zhou, Y. Du, X. Tian, and D. Tao,
``QAOA-in-QAOA: Solving large-scale MaxCut problems on small quantum machines,''
\emph{Phys. Rev. Applied} \textbf{19}, 024027 (2023).

\bibitem{li2021dcqaoa}
J. Li, M. Alam, and S. Ghosh,
``Large-scale quantum approximate optimization via divide-and-conquer,''
arXiv:2102.13288 (2021).

\bibitem{ponce2025graph}
M. Ponce, R. Herrman, P. C. Lotshaw, S. Powers, G. Siopsis, T. Humble, and J. Ostrowski,
``Graph decomposition techniques for solving combinatorial optimization problems with variational quantum algorithms,''
\emph{Quantum Inf. Process.} \textbf{24}, 60 (2025).

\bibitem{ushijima2021multilevel}
H. Ushijima-Mwesigwa, R. Shaydulin, C. F. A. Negre, S. M. Mniszewski, Y. Alexeev, and I. Safro,
``Multilevel combinatorial optimization across quantum architectures,''
\emph{ACM Trans. Quantum Comput.} \textbf{2}, 1--29 (2021).

\bibitem{bravyi2020obstacles}
S. Bravyi, A. Kliesch, R. Koenig, and E. Tang,
``Obstacles to variational quantum optimization from symmetry protection,''
\emph{Phys. Rev. Lett.} \textbf{125}, 260505 (2020).

\bibitem{zou2025mqaoa}
P. Zou,
``Multiscale quantum approximate optimization algorithm,''
\emph{Phys. Rev. A} \textbf{111}, 012427 (2025).

\bibitem{crawford2021pauli}
O. Crawford, B. van Straaten, D. Wang, T. Parks, E. Campbell, and S. Brierley,
``Efficient quantum measurement of Pauli operators in the presence of finite sampling error,''
\emph{Quantum} \textbf{5}, 385 (2021).

\bibitem{james2001measurement}
D. F. V. James, P. G. Kwiat, W. J. Munro, and A. G. White,
``Measurement of qubits,''
\emph{Phys. Rev. A} \textbf{64}, 052312 (2001).

\bibitem{hammer1984roof}
P. L. Hammer, P. Hansen, and B. Simeone,
``Roof duality, complementation and persistency in quadratic 0--1 optimization,''
\emph{Math. Program.} \textbf{28}, 121--155 (1984).

\bibitem{alon2016probabilistic}
N. Alon and J. H. Spencer,
\emph{The Probabilistic Method}, 4th ed.
(Wiley, Hoboken, NJ, 2016).

\bibitem{gilbert1959random}
E. N. Gilbert,
``Random graphs,''
\emph{Ann. Math. Stat.} \textbf{30}, 1141--1144 (1959).

\bibitem{fiedler1973algebraic}
M. Fiedler,
``Algebraic connectivity of graphs,''
\emph{Czechoslovak Math. J.} \textbf{23}, 298--305 (1973).

\bibitem{hoeffding1963probability}
W. Hoeffding,
``Probability inequalities for sums of bounded random variables,''
\emph{J. Am. Stat. Assoc.} \textbf{58}, 13--30 (1963).

\bibitem{holland1983stochastic}
P. W. Holland, K. B. Laskey, and S. Leinhardt,
``Stochastic blockmodels: First steps,''
\emph{Social Networks} \textbf{5}, 109--137 (1983).

\bibitem{code}
\url{https://github.com/981090208/SP-HQAOA-with-Subproblems-Beyond-the-Qubit-Limit}

\end{thebibliography}

\end{document}